\newtheorem{theorem}{Theorem}
\newtheorem{lemma}{Lemma}
\newcommand{\Y}{\mathbf{Y}}
\newcommand{\xvec}{\mathbf{x}}
\newcommand{\yvec}{\mathbf{y}}
\newcommand{\zvec}{\mathbf{z}}
\newcommand{\cvec}{\mathbf{c}}
\newcommand{\zerovec}{\mathbf{0}}
\newcommand{\epsvec}{\boldsymbol{\epsilon}}
\newcommand{\thetavec}{\boldsymbol{\theta}}
\newcommand{\Z}{\mathbb{Z}}
\DeclareMathOperator*{\argmax}{arg\,max\,}
\DeclareMathOperator*{\plim}{plim\,}
\algrenewcommand\textproc{}
\begin{document}

\title{High-dimensional structure learning of sparse vector autoregressive
    models using fractional marginal pseudo-likelihood
}

\author{
	Kimmo Suotsalo\footnote{Aalto University} \and
	Yingying Xu\footnote{RIKEN, Center for Advanced Intelligence Project
		AIP / iTHEMS} \and
	Jukka Corander\footnote{University of Oslo} \and
	Johan Pensar$^\ddagger$
}

\date{}
\maketitle

\begin{abstract}

Learning vector autoregressive models from multivariate time series is
conventionally approached through least squares or maximum likelihood
estimation. These methods typically assume a fully connected model which
provides no direct insight to the model structure and may lead to highly noisy
estimates of the parameters. Because of these limitations, there has been an
increasing interest towards methods that produce sparse estimates through
penalized regression.  However, such methods are computationally intensive and
may become prohibitively time-consuming when the number of variables in the
model increases. In this paper we adopt an approximate Bayesian approach to the
learning problem by combining fractional marginal likelihood and
pseudo-likelihood. We propose a novel method, PLVAR, that is both faster and
produces more accurate estimates than the state-of-the-art methods based on
penalized regression. We prove the consistency of the PLVAR estimator and
demonstrate the attractive performance of the method on both simulated and
real-world data.
\\ \\
\noindent\textbf{Keywords} Vector autoregression \textbullet \:
Pseudo-likelihood \textbullet \: Fractional marginal likelihood \textbullet \:
Gaussian graphical models \textbullet \: Multivariate time series

\end{abstract}

%===============================================================================

\section{Introduction}
\label{section:introduction}

Vector autoregressive (VAR) models
\cite{Lutkepohl-2005,Brockwell-2016,Neusser-2016} have become standard tools in
many fields of science and engineering. They are used in,
for example, economics \cite{Ang-2003,Ito-2008,Zang-2012}, psychology
\cite{Wild-2010,Bringmann-2013,Epskamp-2018}, and sustainable energy technology
\cite{Dowell-2015,Cavalcante-2017,Zhao-2018}. In neuroscience, VAR models have
been widely applied in brain connectivity analysis with data collected through
functional magnetic resonance imaging
\cite{Baccala-2001,Harrison-2003,Roebroeck-2005}, magnetoencephalography
\cite{Tsiaras-2011,Michalareas-2013,Fukushima-2015}, or electroencephalography
(EEG) \cite{Supp-2007,Gomez-2008,Chiang-2009}.

The conventional approach to learning a VAR model is to utilize either
multivariate least squares (LS) estimation or maximum likelihood (ML)
estimation \cite{Lutkepohl-2005}. Both of these methods produce dense model
structures where each variable interacts with all the other variables. Some of
these interactions may be weak but it is typically impossible to tell which
variables are directly related to each other, making the estimation results difficult
to interpret. Yet, the interpretation of the model dynamics is a goal of itself
in many applications, such as reconstructing genetic networks from gene
expression data \cite{Abegaz-2013}, or identifying functional connectivity
between brain regions \cite{Chiang-2009}.

Estimating a dense VAR model also suffers from another problem, namely, the large
number of model parameters may lead to noisy estimates and give rise to
unstable predictions \cite{Davis-2016}. In a dense VAR model, the number of
parameters grows linearly with respect to the lag length and quadratically with
respect to the number of variables. The exact number of parameters in such a
model equals $kd^2 + d(d+1)/2$ where $k$ denotes the lag length and $d$ is the
number of variables.

Due to the reasons mentioned above, methods that provide sparse estimates of
VAR model structures have received a considerable amount of attention in the
literature
\cite{Valdes-2005,Arnold-2007,Hsu-2008,Haufe-2010,Banbura-2010,Melnyk-2016}. A
particularly popular group of methods is based on penalized regression where a
penalty term is included to regularize the sparsity in the estimated model.
Different penalty measures have been investigated
\cite{Valdes-2005,Hsu-2008,Haufe-2010}, including the $L_1$ norm, the $L_2$
norm, hard-thresholding, smoothly clipped absolute deviation (SCAD)
\cite{Fan-2001}, and mixture penalty. Of these, using the $L_2$ norm is also
known as ridge regression, and using the $L_1$ norm as the least absolute
shrinkage and selection operator (LASSO) \cite{Tibshirani-1996}. While ridge
regression penalizes strong connections between the variables, it is incapable
of setting them to exactly zero. In contrast, LASSO and SCAD have the ability to force some of the parameter values to exactly zero. Hence,
these two methods are suitable for variable selection and, in the context of
VAR models, learning a sparse model structure while simultaneously estimating
its parameters.

Despite their usefulness in many applications, LASSO and SCAD also have certain
shortcomings with respect to VAR model estimation. LASSO is likely to give
inaccurate estimates of the model structure when the sample size is small
\cite{Arnold-2007,Melnyk-2016}. In particular, the structure estimates obtained
by LASSO tend to be too dense \cite{Haufe-2010}. While SCAD has been reported
to perform better \cite{Abegaz-2013}, it still leaves room for improvement in
many cases. In addition, both LASSO and SCAD involve one or more
hyperparameters whose value may have a significant effect on the results.
Suitable hyperparameter values may be found through cross-validation but this
is typically a very time-consuming task.

In order to overcome the challenges related to LASSO and SCAD, we propose a
novel method for learning VAR model structures. We refer to this method as
pseudo-likelihood vector autoregression (PLVAR) because it utilizes a
pseudo-likelihood based scoring function. The PLVAR method works in two steps:
it first learns the temporal structure of the VAR model and then proceeds to
estimate the contemporaneous structure. The PLVAR method is shown to yield
highly accurate results even with small sample sizes. Importantly, it is also
very fast compared with the other methods.

The contributions of this paper are summarised as follows. 1) We extend the
idea of learning the structure of Gaussian graphical models via
pseudo-likelihood and Bayes factors \cite{Leppa-aho-2017,Pensar-2017} from the
cross-sectional domain into the time-series domain of VAR models. The resulting
PLVAR method is able to infer the complete VAR model structure, including the
lag length, in a single run. 2) We show that the PLVAR method produces a
consistent structure estimator in the class of VAR models. 3) We present an
iterative maximum likelihood procedure for inferring the model parameters for a
given sparse VAR structure, providing a complete VAR model learning pipeline.
4) We demonstrate through experiments that the proposed PLVAR method is more
accurate and much faster than the current state-of-the-art methods.

The organization of this paper is as follows. The first section provides an
introduction to VAR models and their role in the recent literature. The second
section describes the PLVAR method, a consistency result for the method, and
finally a procedure for estimating the model parameters. The third section
presents the experiments and results, and the fourth section discusses the
findings and makes some concluding remarks. Appendices A and B provide a
search algorithm and a consistency proof, respectively, for the PLVAR estimator.

\subsection{VAR models}
\label{subsection:var_models}

In its simplest form, a VAR model can be written as
\begin{equation} \label{equation_VAR(k)_model}
    \yvec_t = \sum_{m = 1}^{k} \mathbf{A}_m \yvec_{t-m} + \epsvec_t
\end{equation}
where $\yvec_t$ is the vector of current observations, $\mathbf{A}_m$ are the
lag matrices, $\yvec_{t-m}$ are the past observations, and $\epsvec_t$ is a
random error. The observations are assumed to be made in $d$ dimensions. The
lag length $k$ determines the number of past observations influencing the
distribution of the current time step in the model. In principle, the random
error can be assumed to follow any distribution, but it is often modeled using a
multivariate normal distribution with no dependencies between the time steps.
In this paper we always assume the error terms to be independent and
identically distributed (i.i.d.) and that $\epsvec_t \sim
N(\zerovec,\boldsymbol{\Sigma})$.

It is also possible to express a VAR model as
\begin{equation} \label{equation_VAR_model_with_trend_and_constant}
    \yvec_t = \sum_{m = 1}^{k} \mathbf{A}_m \yvec_{t-m} + \mathbf{b} t + \cvec +
    \epsvec_t
\end{equation}
where $\mathbf{b}t$ accounts for a linear time trend and $\cvec$ is a constant.
However, for time series that are (at least locally) stationary, one can always
start by fitting a linear model to the data. This gives estimates of
$\mathbf{b}$ and $\cvec$, denoted as $\hat{\mathbf{b}}$ and $\hat{\cvec}$,
respectively. The value $\hat{\mathbf{b}} t + \hat{\cvec}$ can then be
subtracted from each $\yvec_t$, thereby reducing the model back to
(\ref{equation_VAR(k)_model}).

A distinctive characteristic of VAR models is their linearity. The fact
that the models are linear brings both advantages and drawbacks. One of the
biggest advantages is the simplicity of the models that allows closed-form
analytical derivations, especially when the error terms are i.i.d. and
Gaussian. The simplicity of the models also makes them easy to interpret: it is
straightforward to see which past observations have effect on the current
observations. A clear drawback of VAR models is that linear mappings may fail
to fully capture the complex relationships between the variables and the
observations. In practice, however, VAR models have shown to yield useful
results in many cases: some illustrative examples can be found in
\cite{Harrison-2003,Gomez-2008,Wild-2010,Michalareas-2013}.

\paragraph{Relationship with state space models}

VAR models are also related to linear state space models. A VAR($k$) model of
the form (\ref{equation_VAR(k)_model}) can always be written
\cite{Lutkepohl-2005} as an equivalent VAR(1) model
\begin{equation} \label{equation_VAR(1)_model}
    \Y_t = \mathbf{D} \Y_{t-1} + \mathbf{E}_t
\end{equation}
where
\begin{equation}
    \mathbf{Y}_t =
    \begin{bmatrix}
        \yvec_t \\ \yvec_{t-1} \\ \vdots \\ \yvec_{t-k+1}
    \end{bmatrix}, \:\:\:
    \mathbf{D} =
    \begin{bmatrix}
        \mathbf{A}_1 & \mathbf{A}_2 & \cdots & \mathbf{A}_{k-1} & \mathbf{A}_k
        \\
        \mathbf{I}   & \mathbf{0}   & \cdots & \mathbf{0} & \mathbf{0} \\
        \mathbf{0}   & \mathbf{I}   &        & \mathbf{0} & \mathbf{0} \\
        \vdots       &              & \ddots &            & \vdots \\
        \mathbf{0}   & \mathbf{0}   & \cdots & \mathbf{I} & \mathbf{0} \\
    \end{bmatrix}, \:\:\:
    \mathbf{E}_t =
    \begin{bmatrix}
        \epsvec_t \\ \mathbf{0} \\ \vdots \\ \mathbf{0}
    \end{bmatrix}.
\end{equation}
As \cite{Lutkepohl-2005} points out, (\ref{equation_VAR(1)_model}) can be
seen as the transition equation of the state space model
\begin{align}
    \xvec_t &= \mathbf{G}_t \xvec_{t-1} + \mathbf{u}_t \\
    \yvec_t &= \mathbf{H}_t \xvec_t + \mathbf{v}_t
\end{align}
where $\mathbf{H}_t = \mathbf{I}$ and $\mathbf{v}_t = \zerovec$. This
can be interpreted as the system state being equal to the observed values, or
equivalently, as a case where one does not include a hidden state in the model.
While a hidden state is a natural component in certain applications, its
estimation requires computationally expensive methods such as Kalman filters.
Also, in some cases, the transition equation may not be known or is not backed
up with a solid theory. In such cases VAR modeling allows the estimation and
the analysis of the relationships between the current and the past
observations. Although the VAR modeling approach concentrates on the
observations rather than the underlying system that generates them, it can
still provide highly useful results such as implications of Granger causality
or accurate predictions of future observations.

\subsection{Graphical VAR modeling}

Similar to \cite{Dahlhaus-2003}, this paper utilizes a time series chain (TSC)
graph representation for modeling the structure of the VAR process. Let $Y = \{
Y_t^i:t \in \Z, i \in V \}$, where $V = \{ 1,\ldots,d \}$, represent a
stationary $d$-dimensional process. For any $S \subseteq V$, let $Y^S$ denote
the subprocess given by the indices in $S$ and let $\overline Y_t^S = \{ Y_u^S:
u < t \}$ denote the history of the subprocess at time $t$. In the TSC graph
representation by \cite{Dahlhaus-2003}, the variable $Y_t^i$ at a specific time
step is represented by a separate vertex in the graph. More specifically, the
TSC graph is denoted by $G_{TS} = (V_{TS}, E_{TS})$, where $V_{TS} = V \times
\Z$ is a time-step-specific node set, and $E_{TS}$ is a set of directed and
undirected edges such that
\begin{equation} \label{eqn:temporal_markov_blanket}
\begin{aligned}
&(a, t-u) \to (b,t) \notin E_{TS} \\
&\Leftrightarrow \\
&u \leq 0 \ \ \text{ or } \ \ Y^a_{t-u} \perp Y^b_t \:|\: \overline{Y}_t \setminus \{Y^a_{t-u}\}
\end{aligned}
\end{equation}
and
\begin{equation} \label{eqn:contemp_markov_blanket}
\begin{aligned}
&(a, t-u) \leftrightarrow (b,t) \notin E_{TS} \\
&\Leftrightarrow \\
&u \neq 0 \ \ \text{ or } \ \ Y^a_t \perp Y^b_t \:|\: \overline{Y}_t \cup \{Y^{V\setminus\{a,b\}}_t\}.
\end{aligned}
\end{equation}
As noted in \cite{Dahlhaus-2003}, conditions \eqref{eqn:temporal_markov_blanket} and \eqref{eqn:contemp_markov_blanket}
imply that the process adheres to the pairwise AMP Markov property
 for $G_{TS}$ \cite{Andersson-2001}.

In terms of a VAR process of form \eqref{equation_VAR(k)_model} with $\epsvec_t \sim N(\zerovec,\boldsymbol{\Sigma})$, the conditional independence relations in \eqref{eqn:temporal_markov_blanket} and \eqref{eqn:contemp_markov_blanket} correspond to simple restrictions on the lag matrices $\mathbf{A}_m$, $m=1,\ldots,k,$ and the precision matrix $\boldsymbol{\Omega} = \boldsymbol{\Sigma}^{-1}$ \cite{Dahlhaus-2003}. More specifically, we have that
\begin{equation*}
(a, t-u) \to (b,t) \in E_{TS} \Leftrightarrow  u \in \{1,\ldots,k \} \text{ and
} \mathbf{A}_u(b,a) \neq 0
\end{equation*}
and
\begin{equation*}
(a, t) \leftrightarrow (b,t) \in E_{TS} \Leftrightarrow \boldsymbol{\Omega}(a,b)
\neq 0.
\end{equation*}
Thus, finding the directed and undirected edges in $G_{TS}$ is
equivalent to finding the non-zero elements of the lag matrices $\mathbf{A}_m$
and the non-zero off-diagonal elements of the precision matrix
$\boldsymbol{\Omega}$, respectively.

As a concrete example of the above relationships, consider the TSC-graph in Figure \ref{fig:mixed_graph}(a) which represents the dependence structure of the following sparse VAR(2) process:

\begin{align*}
\mathbf{A}_1 &=
    \begin{bmatrix}
        0.3 & 0 & 0 & 0\\
        -0.2 & 0.2 & 0 & 0\\
        0 & 0 & -0.3 & 0\\
        0 & 0 & 0.2 & -0.2\\
    \end{bmatrix}, \quad
\mathbf{A}_2 =
    \begin{bmatrix}
        0 & 0.1 & 0 & 0\\
        0 & 0 & 0 & 0\\
        0 & 0 & 0 & -0.1\\
        0 & 0 & 0 & 0\\
    \end{bmatrix},
\\
\boldsymbol{\Omega} &=
    \begin{bmatrix}
        1 & 0 & 0.2 & 0\\
        0 & 1 & 0 & 0\\
        0.2 & 0 & 1 & 0.2 \\
        0 & 0 & 0.2 & 1\\
    \end{bmatrix}.
\end{align*}

\begin{figure}
    \begin{center}
        \includegraphics[width=0.48\textwidth, trim={75 190 210 165}, clip]
        {./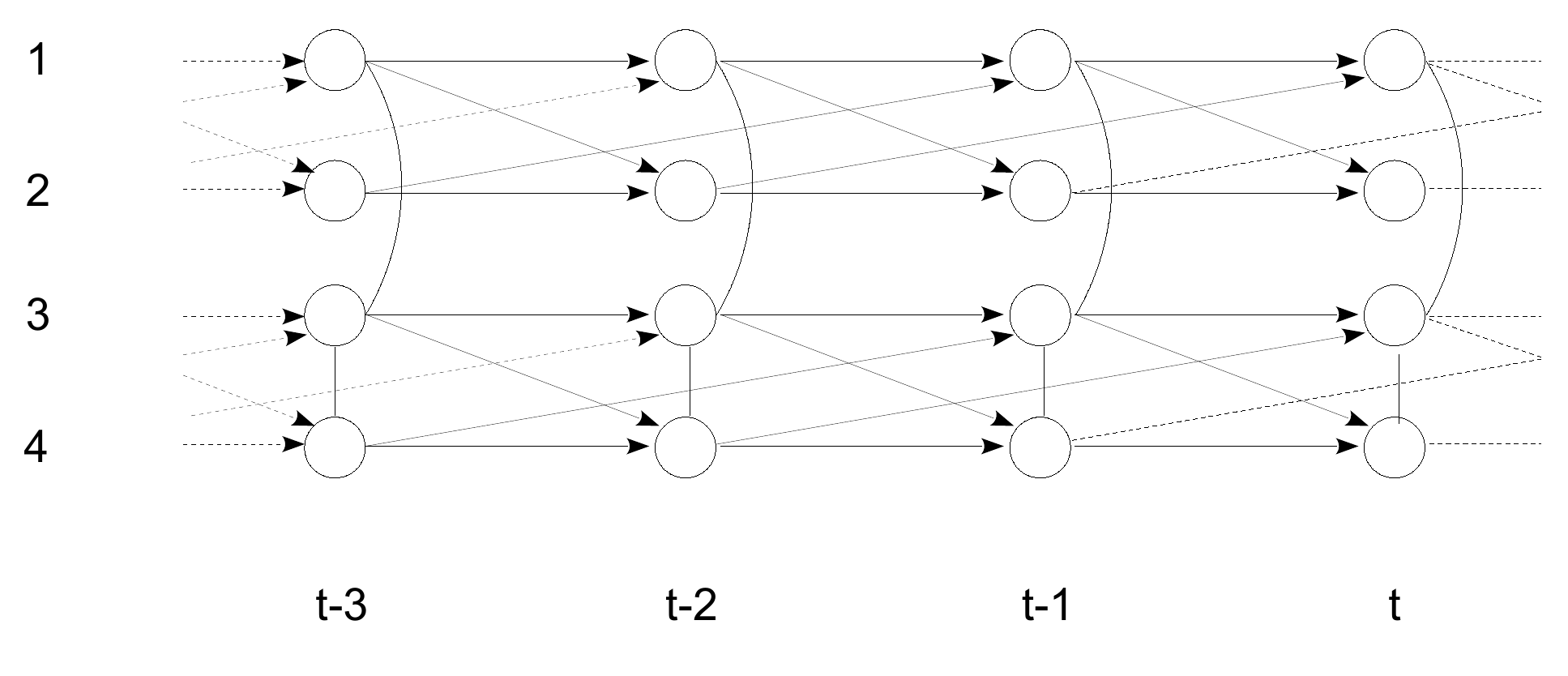}
        \textbf{(a)} TSC graph.

        \includegraphics[width=0.48\textwidth, trim={75 130 210 100}, clip]
        {./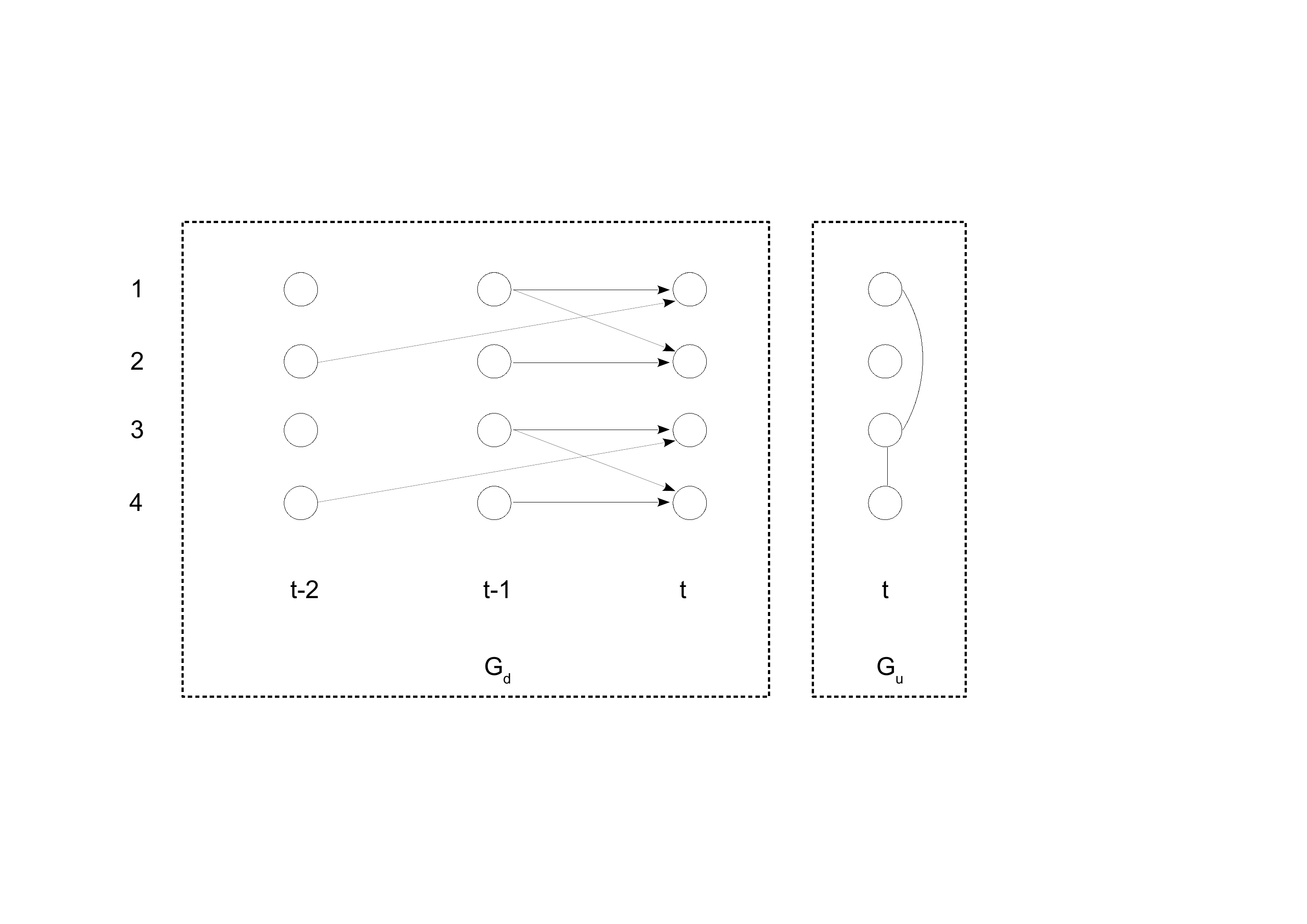}

        \textbf{(b)} GVAR structure.
        \caption{(a) TSC graph and (b) corresponding GVAR structure of a sparse VAR(2) process.
            \label{fig:mixed_graph}}
    \end{center}
\end{figure}

\noindent To compactly encode $G_{TS}$, which essentially is a graph over all
time steps, we will use the subgraphs $G_d \triangleq  (V_d, E_d)$ and $G_u
\triangleq (V_u, E_u)$. The temporal graph $G_d$, with respect to the reference
time step $t$, contains the directed edges from the lagged node sets
$V_{t-1},\ldots,V_{t-k}$ to $V_t$. The contemporaneous graph $G_u$ contains the
undirected edges between the nodes in the same time step; $V_u = V_t$. Finally,
following the graphical VAR modeling framework of \cite{Epskamp-2018}, we refer
to the pair $G \triangleq (G_d, G_u)$ as a graphical VAR (GVAR) structure. For
an example of a GVAR structure, see Figure \ref{fig:mixed_graph}(b).

%===============================================================================

\section{Methods}
\label{section:methods}

The PLVAR method proposed in this paper utilizes a two-step approach for
learning the GVAR model structure. Each step uses a score-based approach, where a
candidate structure is scored according to how well it explains the dependence
structure observed in the data. The proposed scoring function is based on a
Bayesian framework
for model selection. More specifically, PLVAR is based on the concept of
objective Bayes factors for Gaussian graphical models, an approach that has
been used successfully in the cross-sectional setting for learning the
structure of decomposable undirected models \cite{Carvalho-2009}, directed
acyclic graph (DAG) models \cite{Consonni-2012}, and general undirected models
\cite{Leppa-aho-2017}. More recently, this technique has also been used for learning the contemporaneous structure $G_u$ of a graphical VAR model under the assumption that the graph is chordal (decomposable) \cite{Paci-2020}. By utilizing the Bayesian scoring function, we here develop an algorithm that infers the complete model structure $(G_d , G_u)$, where $G_u$ is allowed to be non-chordal. The method requires very little input from the user and is proven to be consistent in the large sample limit.

Given a scoring function, we still need an efficient search algorithm for
finding the optimal structure. To enable scalability to high-dimensional
systems, we propose a search algorithm that exploits the structure of the
considered scoring function via a divide-and-conquer type approach. More
specifically, the initial global structure learning problem is split up into
multiple node-wise sub-problems whose solutions are combined into a final
global structure. Since the number of possible structures in each sub-problem
is still too vast for
finding the global optimum for even moderate values of $d$ and $k$, a greedy
search algorithm is used to find a local maximum in a reasonable time.

\subsection{Bayesian model selection}

The scoring function of the PLVAR method is derived using a Bayesian approach.
Ideally, we are interested in finding a GVAR structure $G$ that is close to optimal in the sense
of maximum a posteriori (MAP) estimation. By the Bayes' rule,
\begin{equation} \label{eqn:bayes_rule}
    p(G | \Y) = \frac{p(\Y | G) p(G)}{\sum p(\Y|G) p(G)}
\end{equation}
where $\Y \triangleq [\yvec_1 \: \cdots \:
\yvec_N]^\top$ is the observation matrix, and the sum in the denominator is
taken over all the possible GVAR structures. Thus, the MAP estimate is given by
\begin{equation}
\hat{G}_{MAP} = \argmax p(\Y|G) p(G).
\end{equation}
A key part of of the above score is the marginal
likelihood $p(\Y|G)$ which is obtained by integrating out the model parameters
$\thetavec$ in the likelihood $p(\Y|G,\thetavec)$ under some prior on the
parameters.

Objective Bayesian model selection poses a problem in the specification of the
parameter prior, as uninformative priors are typically improper. A solution to
this is given by the fractional Bayes factor (FBF) approach \cite{OHagan-1995},
where a fraction of the likelihood is used for updating the improper
uninformative prior into a proper fractional prior. The proper fractional prior
is then used with the remaining part of likelihood to give the FBF. It has been
shown that the FBF provides better robustness against misspecified priors
compared to the use of proper priors in a situation where only vague
information is available about the modeled system \cite{OHagan-1995}.

In this work, we make use of the objective FBF approach for Gaussian graphical
models \cite{Carvalho-2009,Consonni-2012,Leppa-aho-2017,Paci-2020}. In particular, our
procedure is inspired by the fractional marginal pseudo-likelihood (FMPL)
\cite{Leppa-aho-2017}, which allows for non-chordal undirected graphs when
inferring the contemporaneous part of the network.

\subsection{Fractional marginal pseudo-likelihood\label{sec:fmpl}}

In terms of Gaussian graphical models, the FBF was introduced for decomposable
undirected models by \cite{Carvalho-2009}, and extended to DAG models by
\cite{Consonni-2012}, who referred to the resulting score as the fractional
marginal likelihood (FML). Finally, \cite{Leppa-aho-2017} extended the
applicability of the FBF approach to the general class of non-decomposable
undirected models by combining it with the pseudo-likelihood approximation
\cite{Besag-1975}. Our procedure is inspired by the FMPL and can be thought of
as an extension of the approach to time-series data.

As stated in \cite{Leppa-aho-2017}, the FMPL of a Gaussian graphical model is
given by
\begin{equation} \label{eqn:fmpl_dag}
	\begin{aligned}
    pl(\Y | G) &= \prod_{i = 1}^{d} p(\Y_i | \Y_{\text{mb}(i)})\\
    &= \prod_{i = 1}^{d} \pi^{-\frac{N-1}{2}}
    \frac{ \Gamma\left(\frac{N+p_i}{2}\right) }
         { \Gamma\left(\frac{p_i+1}{2}\right) }
    N^{-\frac{2p_i+1}{2}}
    \left(
        \frac{ \left|\mathbf{S_\text{fa($i$)}}\right| }
             { \left|\mathbf{S_\text{mb($i$)}}\right| }
    \right)^{-\frac{N-1}{2}},\\
	\end{aligned}
\end{equation}
where $\text{mb}(i)$ denotes the Markov blanket of $i$, $p_i$ denotes the number of nodes in $\text{mb}(i)$, and $\text{fa}(i)
=\text{mb}(i)\cup i$ is the family of $i$. Moreover, $N$ denotes the number of
observations, $\mathbf{S} = \Y^\top \Y$, and $\mathbf{S_\text{fa($i$)}}$ and
$\mathbf{S_\text{mb($i$)}}$ denote the submatrices of $\mathbf{S}$ restricted
to the nodes that belong to the family and Markov blanket,
respectively, of node $i$. The FMPL is well-defined if $N-1 \geq \max_i (p_i)$, since it ensures (with probability one) positive definiteness of $\mathbf{S_\text{fa($i$)}}$ for each $i= 1,\ldots, d$. Thus, we assume that this (rather reasonable) sparsity condition holds in the following.

Whereas \cite{Leppa-aho-2017} is solely concerned with the learning of
undirected Gaussian graphical models, we are interested in learning a GVAR structure which is made up of both a directed temporal part $G_d$ and an
undirected contemporaneous part $G_u$. Rather than inferring these simultaneously, we break up the problem into two consecutive learning problems that both can be approached using the FMPL. First, in terms of $G_d$, we apply a modified
version of \eqref{eqn:fmpl_dag} where the Markov blankets are built from nodes
in earlier time steps. In terms of $G_u$, we use the inferred directed
structure $\hat{G}_d$ to account for the temporal dependence between the
observations. As this reduces the problem of inferring $G_u$ to the
cross-sectional setting considered in \cite{Leppa-aho-2017}, the score in
\eqref{eqn:fmpl_dag} can be applied as such.

\subsection{Learning the temporal network structure}
\label{subsection:learning_temporal}

In order to learn the temporal part of the GVAR structure, we begin by
creating a lagged data matrix
\begin{equation}\label{eq:Zmatrix}
\mathbf{Z} \triangleq [\Y_{0} \: \Y_{-1} \:
\cdots \: \Y_{-k}]
\end{equation}
where $\Y_{-m} = [\yvec_{k-m+1} \: \cdots \:
\yvec_{N-m}]^\top$. The lagged data matrix thus contains $(k+1)d$ columns, each
corresponding to a time-specific variable; more specifically, the variable of
the $m$:th time series at lag $l$ corresponds to column $ld+m$. As a result of
the missing lag information in the beginning of the time-series, the effective
number of observations (rows) is reduced from $N$ to $N-k$; this, however,
poses no problems as long as $N$ is clearly larger than $k$.

Now, the problem of inferring the temporal structure can be thought of as
follows. For each node $i$, we want to identify, among the lagged variables, a
minimal set of nodes that will shield node $i$ from the remaining lagged variables,
that is, a type of temporal Markov blanket. As seen in
(\ref{eqn:temporal_markov_blanket}), the temporal Markov blanket will equal the
set of lagged variables from which there is a directed edge to node $i$,
commonly referred to as the parents of $i$. Assuming a structure prior that
factorizes over the Markov blankets and a given lag length $k$, we can frame
this problem as the following optimization problem based on the lagged data
matrix and a modified version of \eqref{eqn:fmpl_dag}:
\begin{equation} \label{eqn:fmpl_temp_opt}
	\begin{aligned}
    &\underset{\text{mb}(i)}{\arg\max} \ \ \ \prod_{i = 1}^{d} p(\mathbf{Z}_i | \mathbf{Z}_{\text{mb}(i)}, k)p(\text{mb}(i)|k)\\
    &\text{subject to } \ \ \text{mb}(i) \subseteq \{ d+1,\ldots,(k+1)d \},
	\end{aligned}
\end{equation}
where the so-called local FMPL is here defined as
\begin{align} \label{eqn:local_fmpl_temp}
    & p(\mathbf{Z}_i | \mathbf{Z}_{\text{mb}(i)}, k) \triangleq \nonumber \\
    & \pi^{-\frac{N-k-1}{2}}
    \frac{ \Gamma\left(\frac{N-k+p_i}{2}\right) }
         { \Gamma\left(\frac{p_i+1}{2}\right) }
    (N-k)^{-\frac{2p_i+1}{2}}
    \left(
        \frac{ \left|\mathbf{S_\text{fa($i$)}}\right| }
             { \left|\mathbf{S_\text{mb($i$)}}\right| }
    \right)^{-\frac{N-k-1}{2}}.
\end{align}
The unscaled covariance matrix in \eqref{eqn:local_fmpl_temp} is now obtained
from the lagged data matrix $\mathbf{Z}$, that is, $\mathbf{S} =
\mathbf{Z}^\top \mathbf{Z}$.

The only thing left to specify in \eqref{eqn:fmpl_temp_opt} is the structure
prior. To further promote sparsity in the estimated network, we draw
inspiration from the extended Bayesian information criterion (eBIC; see, for
example,
\cite{Barber-2015})
and define our prior as
\begin{equation} \label{eqn:prior}
    p(G_d|k) = \prod_{i=1}^{d} p(\text{mb}(i)|k) = \prod_{i=1}^{d}(kd)^{- \gamma p_i},
\end{equation}
where $p_i$ is the number of
nodes in the Markov blanket of node $i$ and $\gamma$ is a tuning parameter adjusting the strength of the sparsity promoting effect. As the default value
for the tuning parameter, we use $\gamma = 0.5$. As seen later, this prior will
be essential for selecting the lag length.

From a computational point of view, problem \eqref{eqn:fmpl_temp_opt} is very
convenient in the sense that it is composed of $d$ separate sub-problems that
can be solved independently of each other. To this end, we apply the same
greedy search algorithm as in \cite{Pensar-2017,Leppa-aho-2017}.

\subsection{Search algorithm}

The PLVAR method uses a greedy search algorithm to find the set $\text{mb}(i)$ (with $p_i \leq N-1$)
that maximizes the local FMPL for each $i \in V$. The search algorithm used by the PLVAR method has been published as Algorithm 1
in \cite{Leppa-aho-2017} and originally introduced as Algorithm 1 in
\cite{Pensar-2017}, based on \cite{Tsamardinos-2006}. This algorithm works in
two interleaving steps, deterministically adding and removing individual nodes
in the set $\text{mb}(i)$. At each step, the node $j \notin \text{mb}(i)$ that
yields the largest increase in the local FMPL is added to or removed from
$\text{mb}(i)$. This is a hybrid approach aimed for low computational
cost without losing much of the accuracy of an exhaustive search
\cite{Pensar-2017}. Pseudo-code for the algorithm is given in Appendix
A.

\subsection{Selecting the lag length}

As noted in Subsection \ref{subsection:learning_temporal}, the FMPL of a
graphical VAR model depends on the lag length $k$. Unfortunately, the true
value of $k$ is usually unknown in real-world problems. For the PLVAR method,
we adopt the approach of setting an upper bound $K$ and learning the temporal
structure separately for each $k = 1, ..., K$. The value of $K$ should
preferably be based on knowledge about the problem domain, e.g. by considering
the time difference between the observations and the maximum time a variable
can influence itself or the other variables.

Once the upper bound $K$ has been set, a score is calculated for each $k$.
Typical scores used for this purpose include the Bayesian information criterion
and the Akaike information criterion, both of which are to be minimized. For
the PLVAR method, we eliminate the need for external scores by selecting the $k$
that maximizes our FMPL-based objective function in \eqref{eqn:fmpl_temp_opt}.
To ensure comparable data sets for different values on $k$, the time steps
included in the lagged data matrices are determined by the maximum lag $K$.
Note that our lag selection criterion is made possible by our choice of
structure prior \eqref{eqn:prior} which depends on $k$. As there is no
restriction that a temporal Markov blanket should include nodes of all
considered lags, the FMPL part of the objective function does not explicitly
penalize overly long lag lengths.

\subsection{Learning the contemporaneous network structure}

Once the temporal network structure has been estimated, we proceed to learn the
contemporaneous structure in three steps. First, the non-zero parameters of the
lag matrices $\mathbf{A}_m$ are estimated via the ordinary least squares
method (which is applicable under the sparsity condition $p_i \leq N-1$): this is done separately for each node given its temporal Markov blanket (or parents) in the temporal
network. Next, the fully estimated lag matrices $\hat{\mathbf{A}}_m$ are used
to calculate the residuals $\hat{\epsvec}_t$ between the true and the predicted
observations:
\begin{equation}
    \hat{\epsvec}_t = \yvec_t - \sum_{m = 1}^{k} \hat{\mathbf{A}}_m \yvec_{t-m}.
\end{equation}
If the estimates $\hat{\mathbf{A}}_m$ are accurate, the residuals
$\hat{\epsvec}_t$ will form a set of approximately cross-sectional data,
generated by the contemporaneous part of the model, and the problem is reduced
to the setting of the original FMPL method. Thus, as our objective function, we
use the standard FMPL \eqref{eqn:fmpl_dag} in combination with our eBIC type
prior:
\begin{equation} \label{eqn:prior2}
    p(G_u) = \prod_{i=1}^{d} p(\text{mb}(i)) = \prod_{i=1}^{d}(d-1)^{- \gamma p_i},
\end{equation}
The only difference to \eqref{eqn:prior} is the number of possible Markov blanket members which is now $d-1$. Again, as the default value
for the tuning parameter, we use $\gamma = 0.5$.

Although the Markov blankets are now connected through the symmetry constraint
$i \in \text{mb}(j) \Leftrightarrow j \in \text{mb}(i),$ we use the same
divide-and-conquer search approach as was used for the temporal part of the
network. However, since the search may then result in asymmetries in the Markov
blankets, we use the OR-criterion as described in \cite{Leppa-aho-2017} when
constructing the final graph: if $i \in
\text{mb}(j)$ or $j \in \text{mb}(i)$, then $(i,j) \in E_u$.

\subsection{Consistency of PLVAR}

A particularly desirable characteristic for any statistical estimator is
consistency. This naturally also applies when learning the GVAR structure. Importantly,
it turns out that the proposed PLVAR method enjoys consistency. More
specifically, as the sample size tends to infinity, PLVAR will recover the true
GVAR structure, including the true lag length $k^*$, if the maximum considered
lag length $K$ is larger than or equal to $k^*$. A brief outline of the
rationale is provided in this subsection, and a formal proof of the key steps
is provided in Appendix B.

The authors of \cite{Leppa-aho-2017} show that their FMPL score in combination
with the greedy search algorithm is consistent for finding the Markov blankets
in Gaussian graphical models. Assuming a correct temporal structure, the lag
matrices can be consistently estimated by the LS method \cite{Lutkepohl-2005},
and the residuals will thus tend towards i.i.d. samples from $N(\zerovec,
\boldsymbol{\Sigma})$, where the sparsity pattern of $\Omega = \Sigma^{-1}$
corresponds to the contemporaneous structure. This is the setting considered in
\cite{Leppa-aho-2017}, for which consistency has been proven.

What remains to be shown is that the first step of PLVAR is consistent in terms
of
recovering the temporal structure. Since PLVAR uses a variant of FMPL where the
set $\text{mb}(i)$ is selected from lagged variables, it can be proven, using a
similar approach as in \cite{Leppa-aho-2017}, that PLVAR will recover the true
temporal Markov blankets (or parents) in the large sample limit if $K \geq k^*$
(see Appendix B for more details). Finally, since the structure prior
\eqref{eqn:prior} is designed to favor shorter lag lengths, it will in
combination with the FMPL ensure that the estimated lag length will equal $k^*$
in the large sample limit.

As opposed to LASSO and SCAD, PLVAR needs no conditions on the measurement matrix and relation to true signal strength for model selection consistency \cite{Zhao-2006,Buhlmann-2013,Huang-2007}. The reason for this lies in the difference between the approaches. In LASSO and SCAD, the regularization term introduces bias to the cost function, thereby deviating the estimation from the true signal \cite{Zhao-2006,Buhlmann-2013}. On the other hand, PLVAR estimates the graph structure using the FMPL score: this is a Bayesian approach and thus has no bias in asymptotic behavior \cite{Consonni-2012,Leppa-aho-2017}.

\subsection{Estimating the remaining parameters}

Once the PLVAR method has been used to infer the model structure, the remaining
parameters can be estimated as described in this subsection. This combination
provides a complete pipeline for the learning of graphical VAR models.

The parameters that still remain unknown at this point are the non-zero
elements of the lag matrices $\mathbf{A}_m, m = 1, \ldots, k$ and the non-zero
elements of the precision matrix $\boldsymbol{\Omega}$. We calculate ML estimates
for these parameters iteratively until a convergence criterion is met.

In each iteration, we first estimate the remaining elements of $\mathbf{A}_m$ given the current precision matrix. As initial value for the precision matrix, we use the identity matrix.
In order to enforce the sparsity pattern implied by the temporal network, we use the ML estimation procedure described in
\cite{Lutkepohl-2005}. Next, we calculate the residuals between the true and predicted observations. We then estimate the remaining elements of
$\boldsymbol{\Omega}$ from the residuals while enforcing the sparsity pattern learned earlier for
the contemporaneous network: this is done by applying the ML estimation
approach outlined in \cite{Hastie-2009}.

The iterations are repeated until the absolute difference between the current and the
previous maximized log-likelihood of $\boldsymbol{\Omega}$ is smaller than a
threshold $\delta$.

%===============================================================================

\section{Experiments and results}
\label{section:experiments_and_results}

In order to assess the performance of the PLVAR method, we made a number of
experiments on both synthetic and real-world data. For the synthetic data, we
used a single node in the Puhti supercomputer of CSC – IT Center for Science,
Finland, equipped with 4 GB RAM and an Intel\textsuperscript{\textregistered}
Xeon\textsuperscript{\textregistered} Gold 6230 processor with 20
physical cores at 2.10 GHz base frequency. For the real-world data, we used a
laptop workstation equipped with 8 GB RAM and an
Intel\textsuperscript{\textregistered} Core\textsuperscript{\texttrademark}
i7-6700HQ CPU with 4 physical cores at 2.60 GHz base frequency. The algorithms
were implemented in the R language. For the sparsity-constrained ML estimation
of $\boldsymbol{\Omega}$, we used the implementation in the R package mixggm
\cite{Fop-2019}. The threshold $\delta$ for determining the convergence of
the parameter estimation was set to $10^{-6}$.

\subsection{Synthetic data}

The performance of the PLVAR method was first evaluated in a controlled setting on data generated from known models with a ground truth structure. Having access to the true network structure, the quality of the inferred network structures were assessed by \emph{precision}: the proportion of true edges among the inferred edges, and \emph{recall}: the proportion of detected edges out of all true edges. Precision and recall were calculated separately for the temporal and the contemporaneous part of the network.

The R package \texttt{SparseTSCGM}\footnote{More specifically, we used the function \texttt{sim.data()}; see \cite{Abegaz-2013} for more details regarding the simulator.} \cite{Abegaz-2013} was used to generate GVAR models with a random network structure and lag length $k=2$. 
We performed two experiments. In the first experiment, we fixed the degree of sparsity and varied the number of variables, $d\in\{20,40,80 \}$. In the second experiment, we fixed the number of variables to $d = 40$ and varied the degree of sparsity. The degree of sparsity in the generated networks is controlled by an edge inclusion probability parameter, called $\texttt{prob0}$. 
In order to have similar adjacency patterns for different number of variables, we specified the edge inclusion parameter as $q/(kd)$, meaning that the average indegree in the temporal structure is $q$ and the average number of neighbors in the contemporaneous structure is $q/2$. In the first experiment, we set $q = 3$, resulting in rather sparse structures, and in the second experiment, we let $q\in\{5,7,9 \}$, resulting in increasingly denser structures. For each considered configuration of $d$ and $q$, we generated 20 stable GVAR models. Finally, a single time series over 800
time points was generated from each model, and the $N$ first
observations, with $N$ varying between 50 and 800, were used to form the final
collection of data sets.

For comparison, LASSO and SCAD were also applied to the generated
data, using the implementations available in \texttt{SparseTSCGM}. While LASSO and SCAD were given the correct lag length $k =
2$, PLVAR was given a maximum allowed lag length $K = 5$ and thereby
tasked with the additional challenge of estimating the correct lag length. In terms of the tuning parameters, the
$\gamma$ prior parameter in PLVAR was set to 0.5, and the remaining parameters
of LASSO and SCAD were set according to the example provided in the \texttt{SparseTSCGM} manual. More specifically, the maximum number of outer and inner loop iterations were set to 10 and 100, respectively. The default grid of candidate values for the tuning parameters (\texttt{lam1} and \texttt{lam2}) were used, and the optimal values were selected using the modified Bayesian information criterion (\texttt{bic\_mod}).

The precision, recall and running times of the considered methods are summarized in Figure \ref{fig:res_sim} for the first experiment and in Figure \ref{fig:res_sim_rev} for the second experiment. For some of the most challenging settings: $N=50$ for $d\in\{ 40,80 \}$ and
$N=100$ for $d=80$, LASSO and SCAD exited with an error and the results are
therefore missing. In terms of precision (Figures \ref{fig:res_sim}(a) and \ref{fig:res_sim_rev}(a)), PLVAR is clearly the most accurate of the methods, both for the temporal and the contemporaneous part of the network structure. In terms
of recall (Figures \ref{fig:res_sim}(b) and \ref{fig:res_sim_rev}(b)), all methods perform very well for the larger samples, recovering close to all edges. For smaller sample sizes, PLVAR has a slight edge on both LASSO and SCAD in the sparser settings, $q=3$ and $q=5$, while it is overall quite even in the denser setting $q=7$, and the situation is almost reversed for $q=9$.

As expected, increasing the number of variables or the number of edges makes the learning problem harder, leading to a reduced accuracy. However, there is no drastic reduction in accuracy for any of the methods. The most drastic change between the settings is perhaps the computation time of LASSO and SCAD when the number of variables is increased (Figure \ref{fig:res_sim}(c)). 
Overall, PLVAR was orders of magnitude
faster than LASSO and SCAD, both of which use cross-validation to select the regularization parameters. However, it must be noted that PLVAR solves a simpler problem in that it focuses on learning the model structure (yet without a given lag length), while the likelihood-based methods also estimate the model parameters
along with the structure. Still, the problem of estimating the model (parameters) is
greatly facilitated once the model structure has been inferred. 

Finally, when it comes to selecting the correct lag length ($k=2$) from the
candidate lag lengths, PLVAR was highly accurate and estimated the lag length
to be higher than 2 for only a few cases in both the sparse and the denser settings (Figure
\ref{fig:lag_sim}). It is also
worth noting that while the maximum lag length supported by the PLVAR method is
only limited by the available computing resources, the LASSO and SCAD implementations in \texttt{SparseTSCGM} currently only support lag lengths 1
and 2.

\begin{figure*}
    \begin{center}
        \includegraphics[width=0.9\textwidth, trim={0 0 0 0}, clip]
        {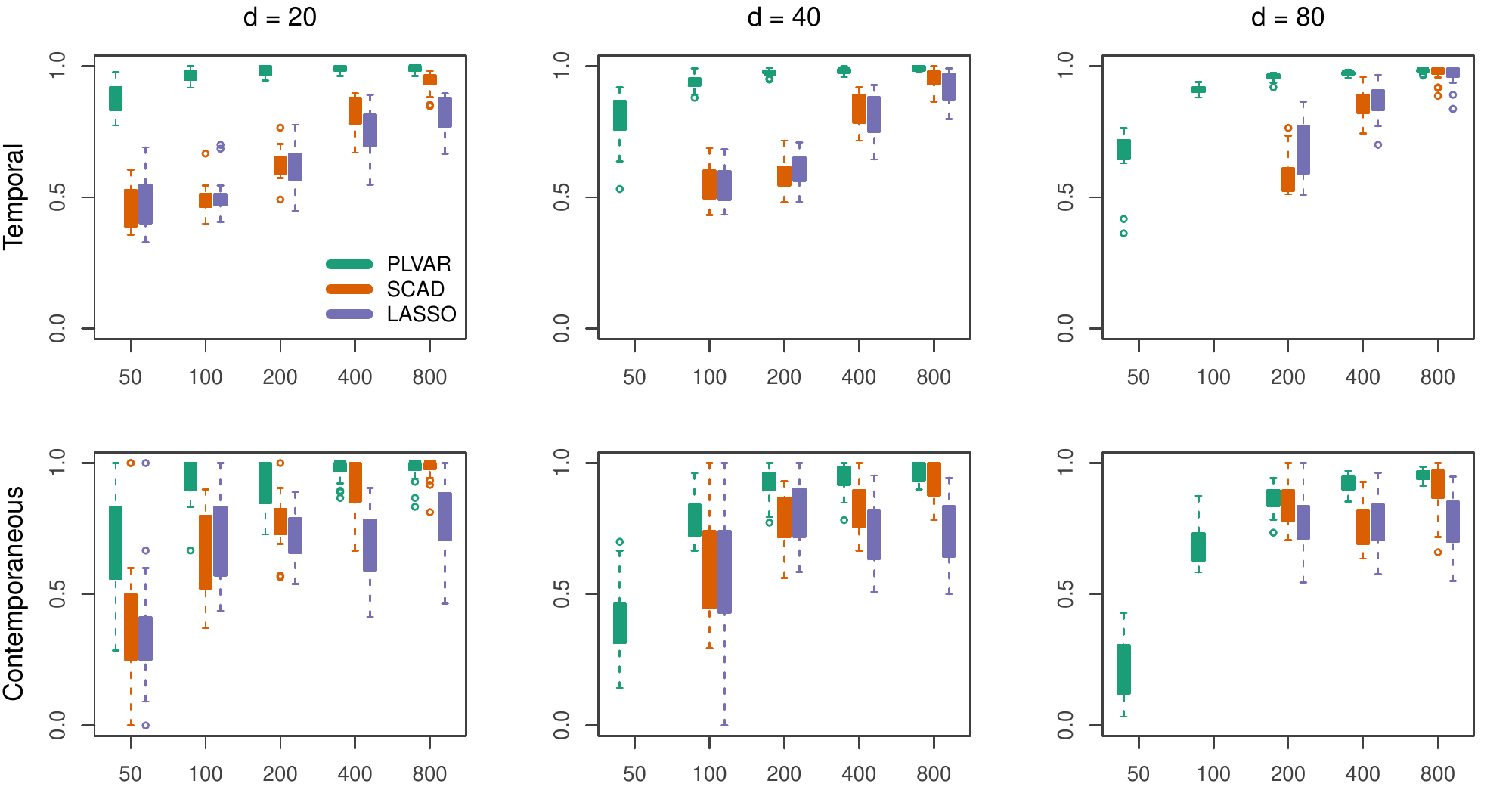}

        \textbf{(a)} Precision. \vspace{0.5cm}

        \includegraphics[width=0.9\textwidth, trim={0 0 0 0}, clip]
        {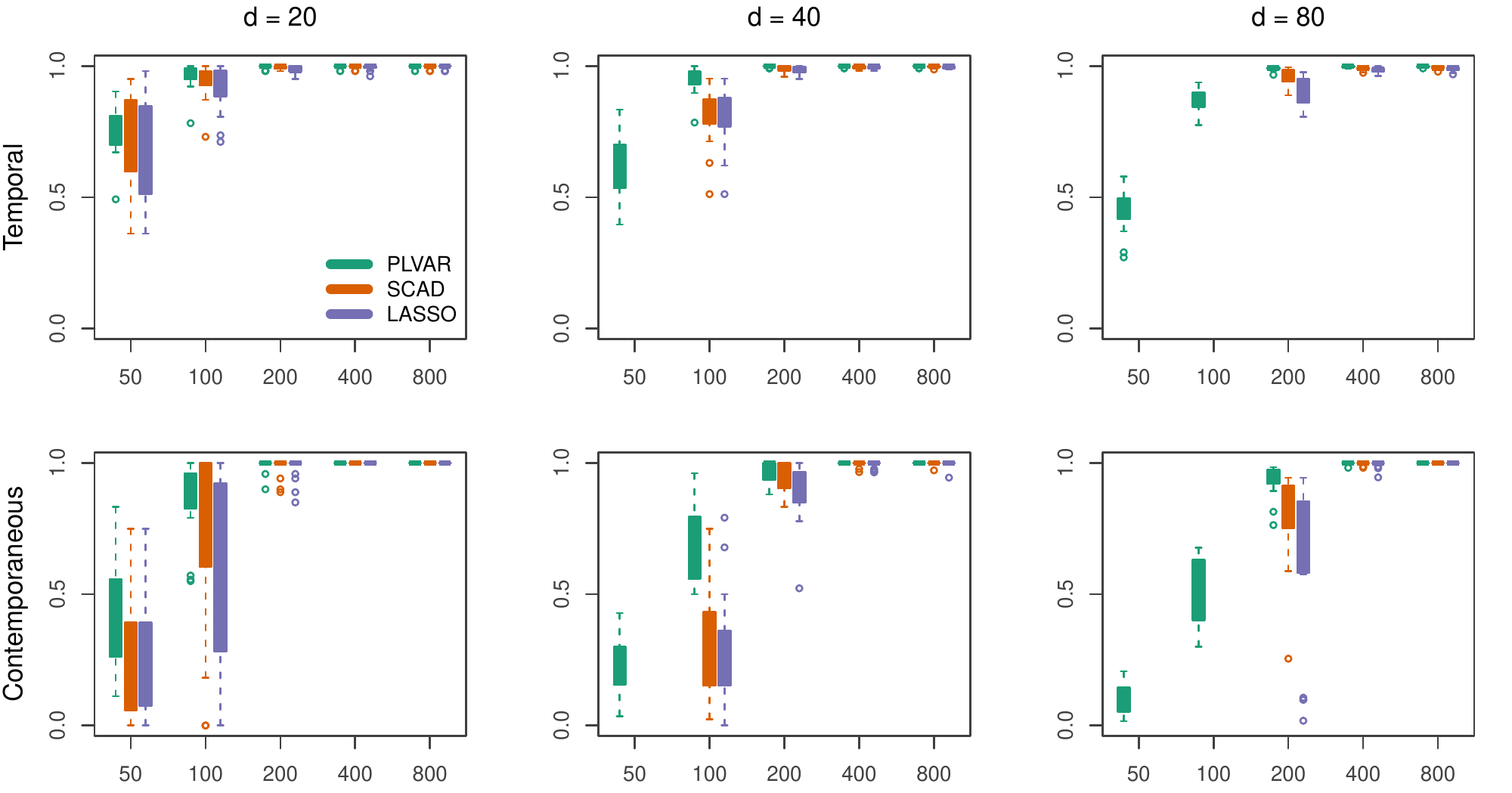}

        \textbf{(b)} Recall. \vspace{0.5cm}

        \includegraphics[width=0.9\textwidth, trim={0 0 0 0}, clip]
        {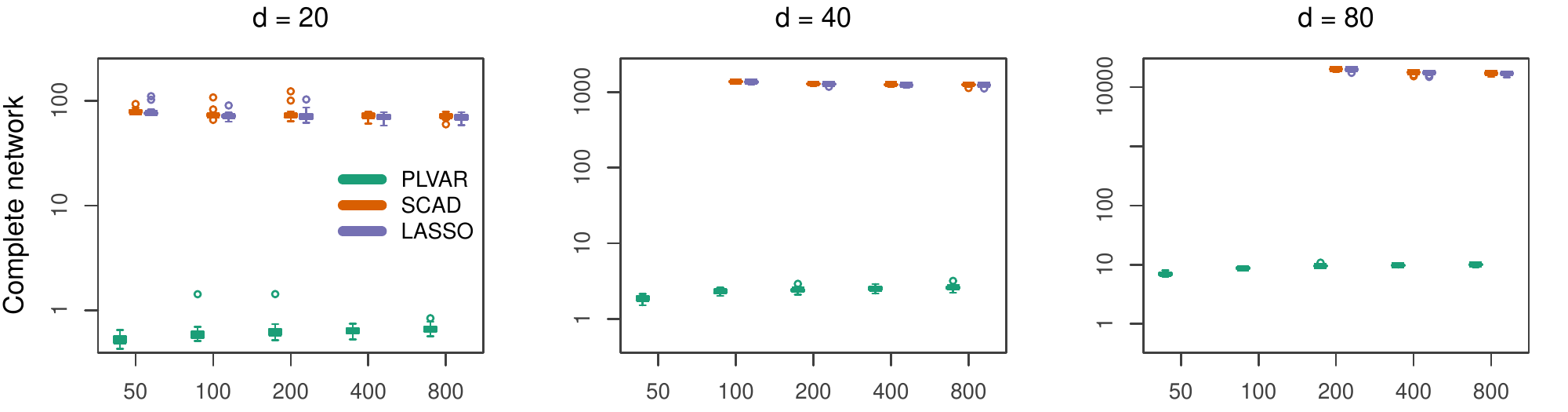}

        \textbf{(c)} Computation time. \vspace{0.1cm}

        \caption{Results of the simulation study with a fixed degree of sparsity ($q = 3$) and an increasing number of variables: (a) Precision and (b) recall
        (y-axis) for the temporal and contemporaneous part of the network
        structure for various sample sizes (x-axis) and model size, $d$. (c)
        Recorded  computation times (y-axis in log scale) for learning the
        complete network structure. Part of the results are missing for LASSO
        and SCAD since they exited with an error for some of the settings. The
        statistics shown in the plots have been calculated from a sample of 20
        random GVAR models.
        \label{fig:res_sim}}
    \end{center}
\end{figure*}

\begin{figure*}
    \begin{center}
        \includegraphics[width=0.9\textwidth, trim={0 0 0 0}, clip]
        {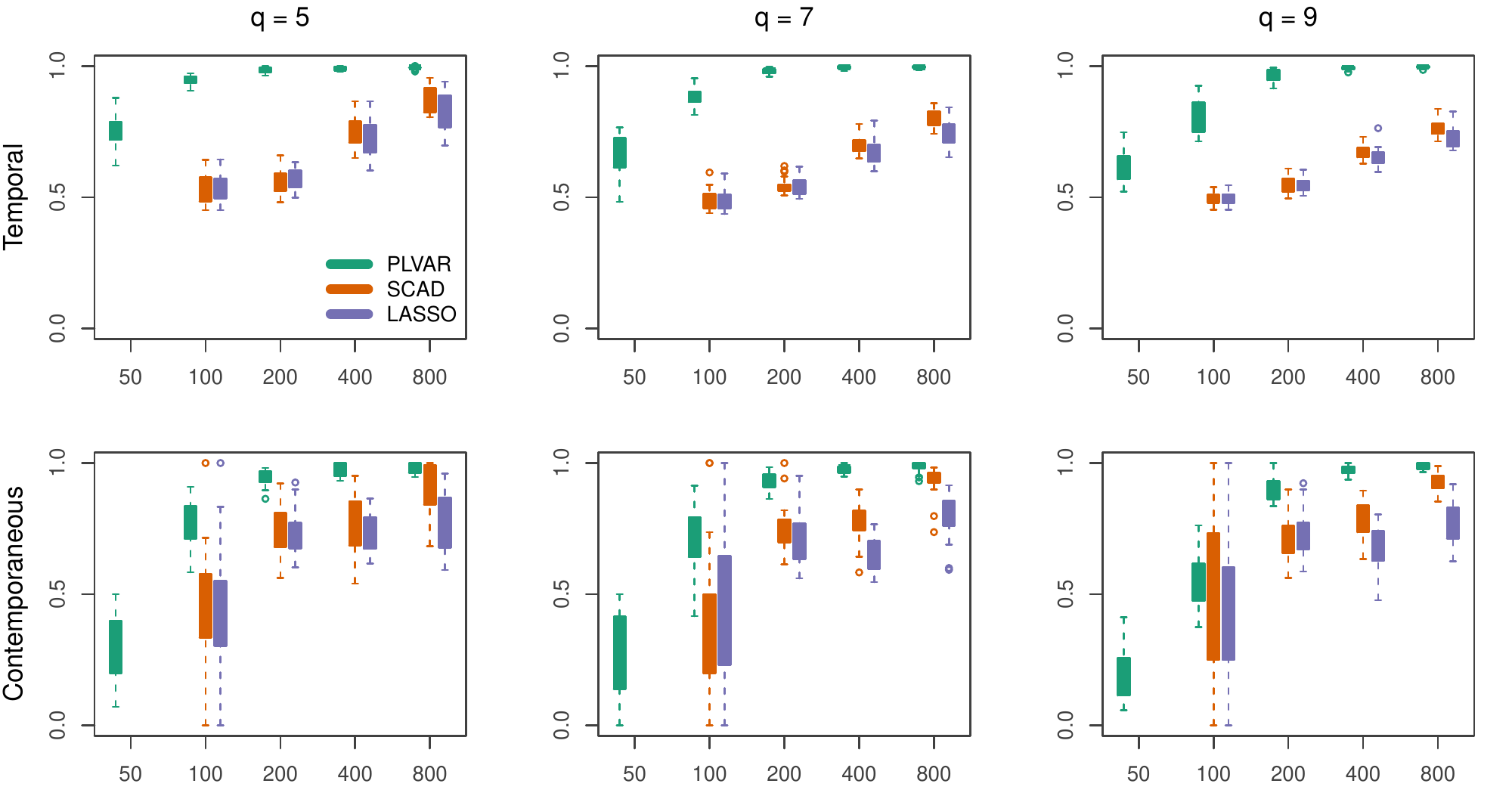}

        \textbf{(a)} Precision. \vspace{0.5cm}

        \includegraphics[width=0.9\textwidth, trim={0 0 0 0}, clip]
        {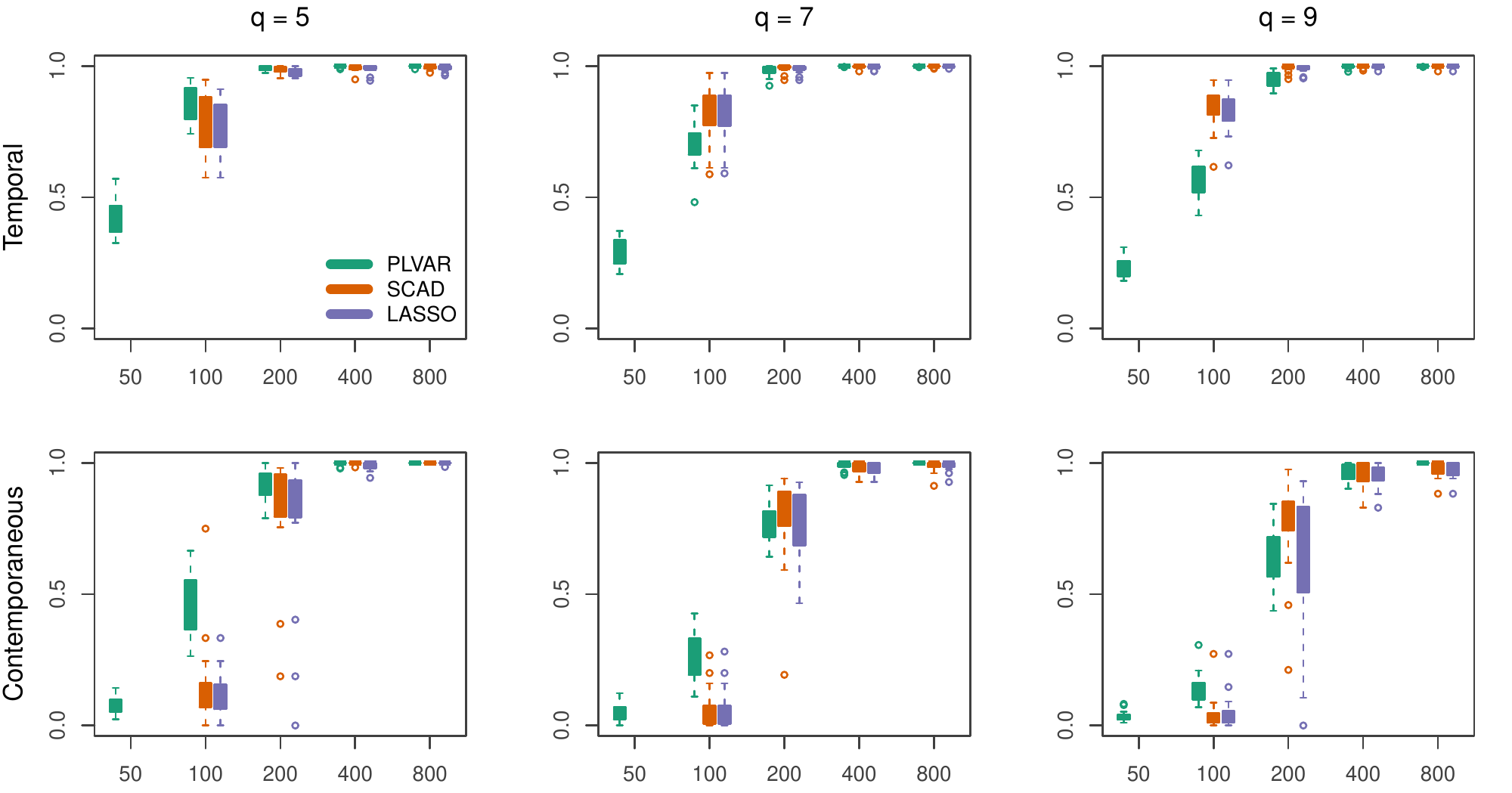}

        \textbf{(b)} Recall. \vspace{0.5cm}

        \includegraphics[width=0.9\textwidth, trim={0 0 0 0}, clip]
        {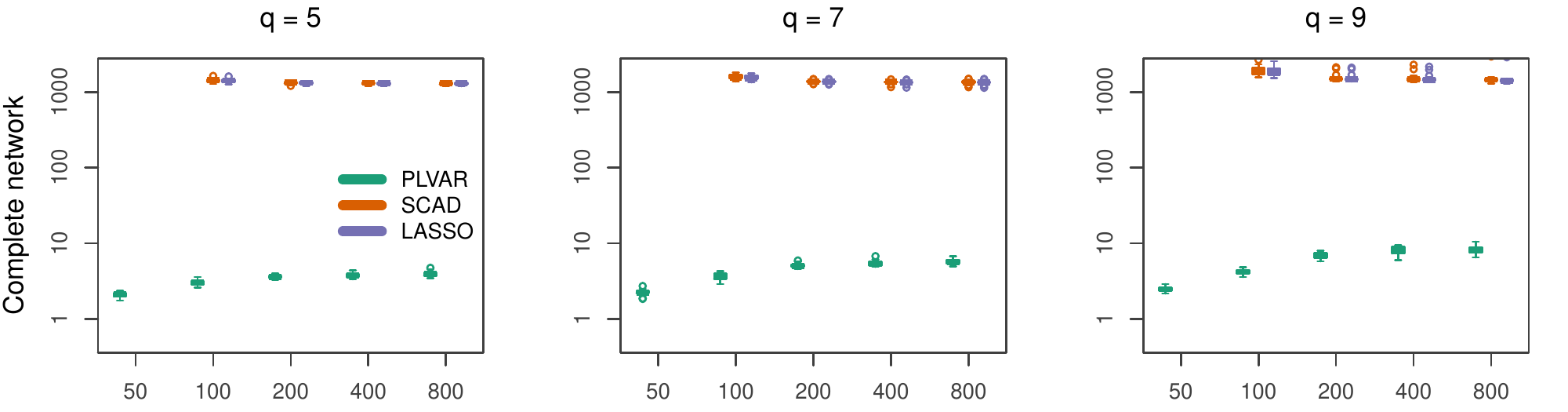}

        \textbf{(c)} Computation time. \vspace{0.1cm}

        \caption{Results of the simulation study with a fixed number of variables ($d=40$) and a decreasing degree of sparsity: (a) Precision and (b) recall
        (y-axis) for the temporal and contemporaneous part of the network
        structure for various sample sizes (x-axis) and degrees of sparsity, as controlled by $q$. (c)
        Recorded  computation times (y-axis in log scale) for learning the
        complete network structure. Part of the results are missing for LASSO
        and SCAD since they exited with an error when $N=50$. The
        statistics shown in the plots have been calculated from a sample of 20
        random GVAR models.
        \label{fig:res_sim_rev}}
    \end{center}
\end{figure*}

\begin{figure*}
    \begin{center}
        \includegraphics[width=0.9\textwidth, trim={0 0 0 0}, clip]
        {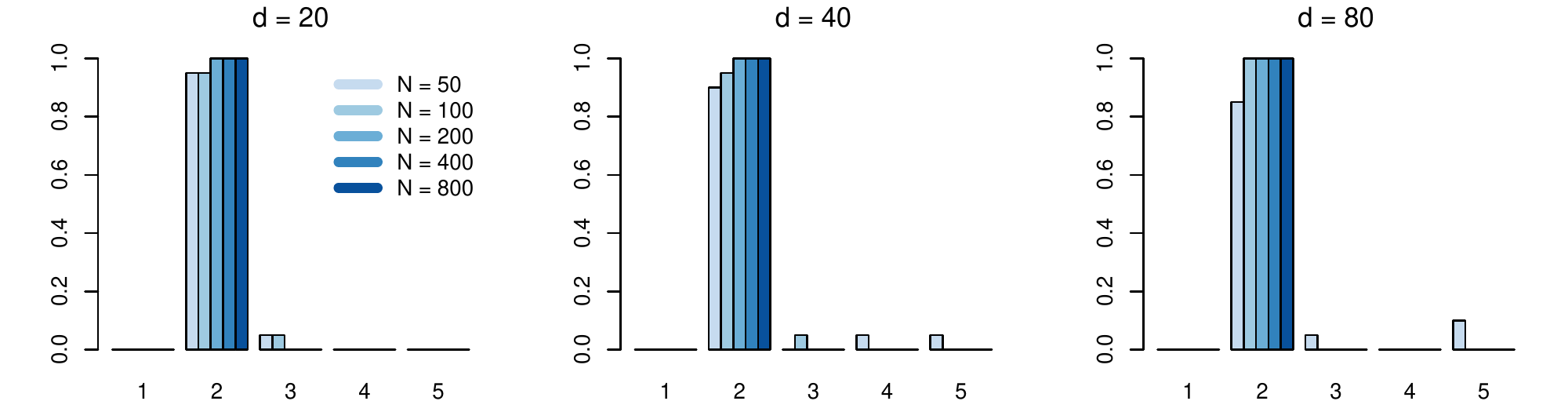}

	\textbf{(a)} \vspace{0.5cm}

        \includegraphics[width=0.9\textwidth, trim={0 0 0 0}, clip]
        {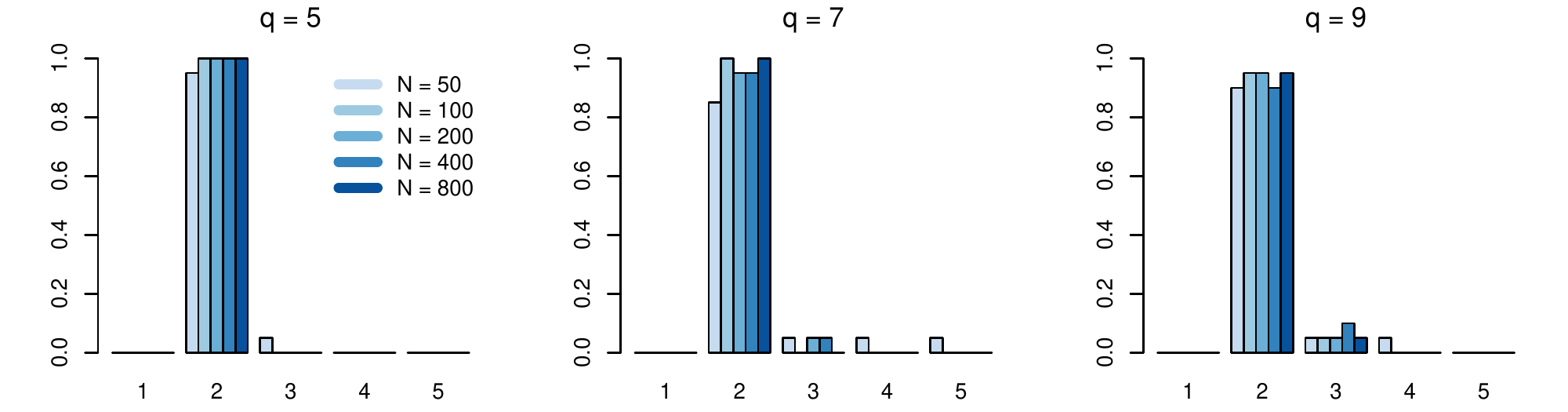}

	\textbf{(b)}

        \caption{Estimated lag length by PLVAR for the different sample sizes
        in the simulation studies: (a) $q = 3$ and (b) $d = 40$. The x-axis represents the candidate lag
        lengths and the y-axis represents the percentage of times a specific
        lag length was chosen. The correct lag length is 2 in all cases.
        \label{fig:lag_sim}}
    \end{center}
\end{figure*}

\subsection{Electroencephalography data}

Since VAR models have found extensive use in EEG analysis, the experiments made
on synthetic data were complemented with experiments on real-world EEG data.
The dataset chosen for this purpose is a subset of the publicly available
CHB-MIT Scalp EEG Database (CHBMIT) \cite{Shoeb-2009,Goldberger-2000}.

The CHBMIT database contains a total of 24 cases of EEG recordings, collected
from 23 pediatric subjects at the Children's Hospital Boston. Each case
comprises multiple recordings stored in European Data Format (.EDF) files.
However, only the first recording of each unique subject was used in order to
speed up the experiments. Case 24 was excluded since it has been added
afterwards and is missing gender and age information. The recordings included
in the experiments are listed in Table \ref{table:recordings_in_chbmit}.

\begin{table}
    \caption{Recordings included in the EEG experiments}
    \label{table:recordings_in_chbmit}
    \centering
    \begin{tabular}{lll}
        \hline\noalign{\smallskip}
        recording filename & subject gender & subject age (years) \\
        \noalign{\smallskip}\hline\noalign{\smallskip}
        chb01\_01.edf & female & 11 \\
        chb02\_01.edf & male & 11 \\
        chb03\_01.edf & female & 14 \\
        chb04\_01.edf & male & 22 \\
        chb05\_01.edf & female & 7 \\
        chb06\_01.edf & female & 1.5 \\
        chb07\_01.edf & female & 14.5 \\
        chb08\_02.edf & male & 3.5 \\
        chb09\_01.edf & female & 10 \\
        chb10\_01.edf & male & 3 \\
        chb11\_01.edf & female & 12 \\
        chb12\_06.edf & female & 2 \\
        chb13\_02.edf & female & 3 \\
        chb14\_01.edf & female & 9 \\
        chb15\_02.edf & male & 16 \\
        chb16\_01.edf & female & 7 \\
        chb17a\_03.edf & female & 12 \\
        chb18\_02.edf & female & 18 \\
        chb19\_02.edf & female & 19 \\
        chb20\_01.edf & female & 6 \\
        chb22\_01.edf & female & 9 \\
        chb23\_06.edf & female & 6 \\
        \noalign{\smallskip}\hline
    \end{tabular}
\end{table}

The recordings in the CHBMIT database have been collected using the
international 10-20 system for the electrode placements. The channels derived
from the electrodes vary between files and some channels only exist in a
subset of the recordings. Hence, the experiments were made using only those 21
channels that can be found in all the recordings, implying $d = 21$ in the VAR
models. Table \ref{table:channels_in_chbmit} lists the channels included in the
experiments.

\begin{table}
    \caption{Channels included in the EEG experiments}
    \label{table:channels_in_chbmit}
    \centering
    \begin{tabular}{llll}
        \hline\noalign{\smallskip}
        FP1-F7 & F7-T7 & T7-P7 & P7-O1 \\
        FP1-F3 & F3-C3 & C3-P3 & P3-O1 \\
        FP2-F4 & F4-C4 & C4-P4 & P4-O2 \\
        FZ-CZ  & CZ-PZ & T7-FT9 &      \\
        FT9-FT10 & FT10-T8 &    &      \\
        \noalign{\smallskip}\hline
    \end{tabular}
\end{table}

All the recordings have a sampling rate of 256 Hz. The first 60 seconds were
skipped in each recording because the signal-to-noise ratio in the beginning of
some recordings is very low. The next 256 samples were used as training data
and presented to each method for learning the model structure and the model
parameters. The following 512 samples were used as test data to evaluate
the performance of the methods.

Each experiment was made using 6 different algorithms. In addition to
PLVAR, LASSO, and SCAD, conventional LS estimation was included for comparison.
As noted before, the SparseTSCGM implementations of LASSO and SCAD only allow
lag lengths 1 and 2. Therefore, the final set of algorithms comprised
PLVAR, LS, PLVAR2, LS2, LASSO2, and SCAD2. In this set, algorithm names ending
with 2 refer to the respective methods provided with lag length 2 for
straightforward comparison. The names PLVAR and LS refer to respective methods
with no a priori lag length information.

\begin{figure*}
    \centering
    \includegraphics[width=.88\textwidth,trim={0 0.5cm 0
        0},clip]{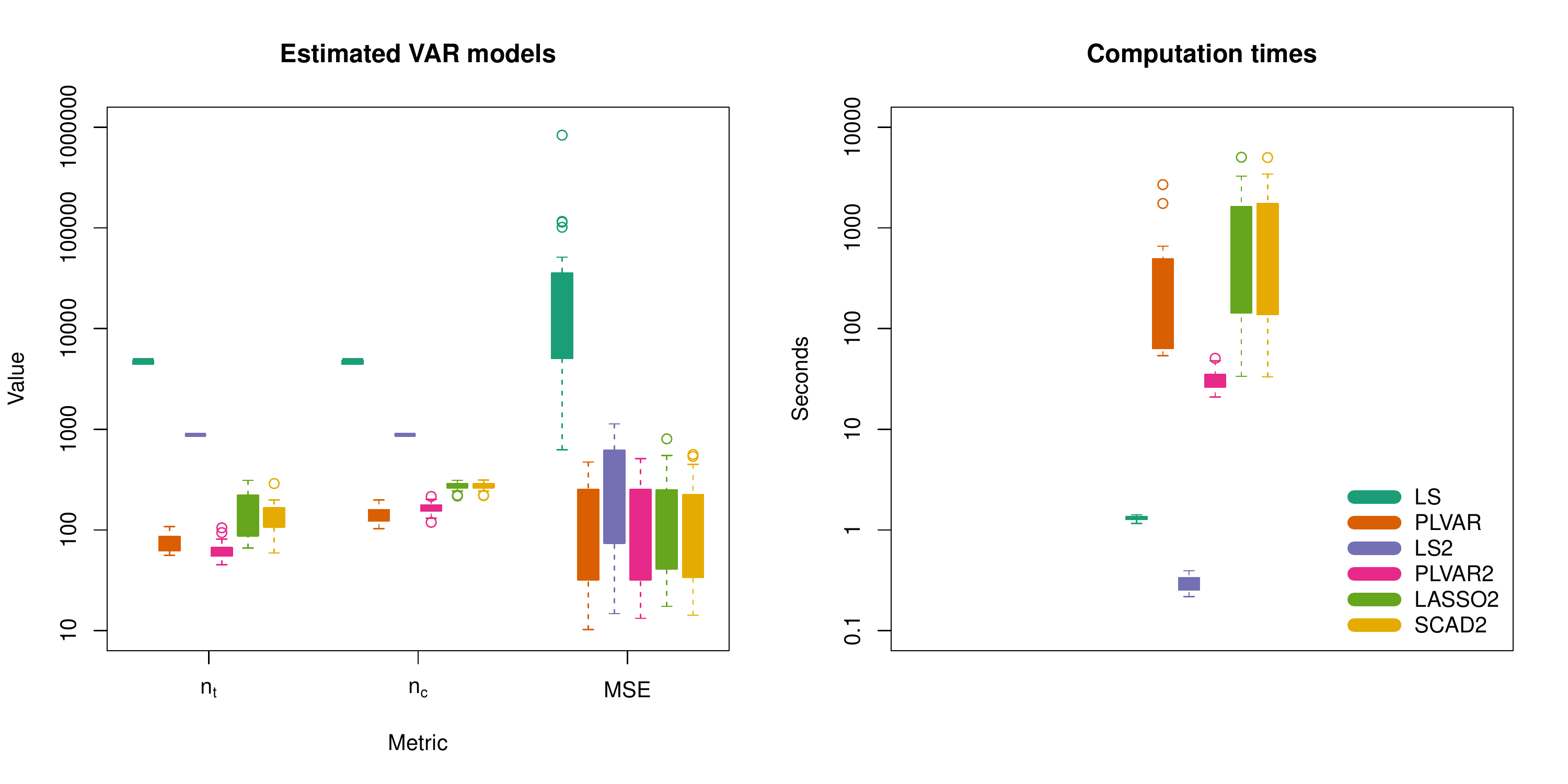}
    \caption{Results on EEG data. $n_\text{t}$, $n_\text{c}$, and MSE denote
        the number of edges in the temporal network, the number of edges in the
        contemporaneous network, and the mean squared error of 1-step
        predictions
        on the test data, respectively. Note the logarithmic $y$ axis in both
        subplots.}
    \label{figure:results_chbmit_1_min_offset}
\end{figure*}

\begin{figure*}
    \centering
    \includegraphics[width=.58\textwidth]
    {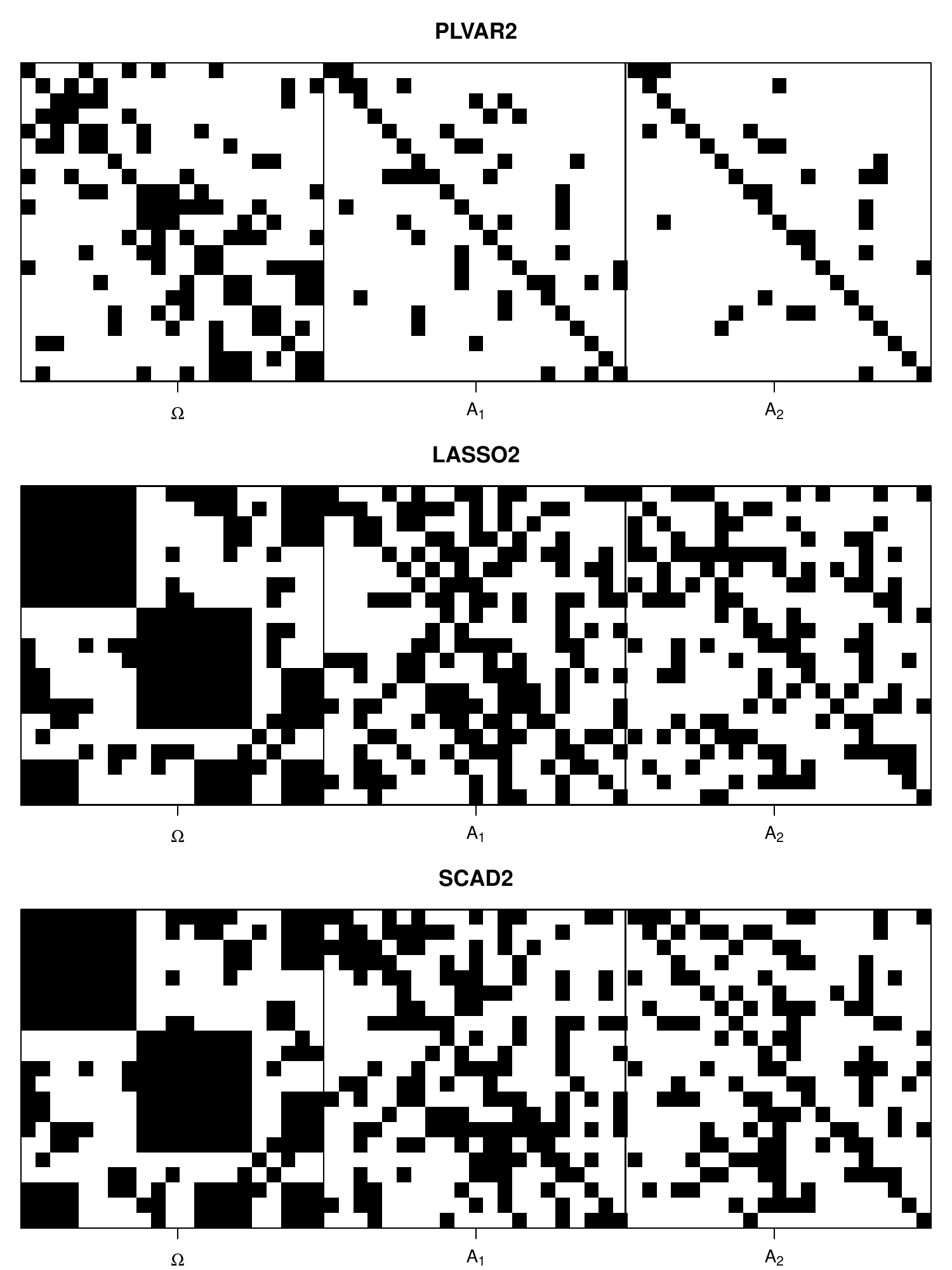}
    \caption{Adjacency matrix estimates, obtained from the EEG recording
        chb06\_01.edf using the PLVAR2, the LASSO2, and the SCAD2 methods. The
        black and white pixels correspond to non-zero and zero elements in the
        matrices, respectively. $\boldsymbol{\Omega}$ refers to the precision
        matrix, $\mathbf{A}_1$ to the first lag matrix, and $\mathbf{A}_2$ to
        the second lag matrix. Note the sparsity of the PLVAR2 estimates as
        compared to the other methods.}
    \label{figure:results_matrix_visualization}
\end{figure*}

The performance of each algorithm was evaluated using the following metrics:
the number of edges in the estimated temporal network ($n_\text{t}$), the
number of edges in the estimated contemporaneous network ($n_\text{c}$), the
mean squared error (MSE) of 1-step predictions on the test data, and the
estimation time in seconds ($t$). The results are summarized in Figure
\ref{figure:results_chbmit_1_min_offset}. An example of the estimated temporal
and contemporaneous networks is given in Figure
\ref{figure:results_matrix_visualization} where the networks are visualized
through their adjacency matrices.

Figure \ref{figure:results_chbmit_1_min_offset} shows that PLVAR and PLVAR2
produced the sparsest estimates for both the temporal network and the
contemporaneous network. The dense estimates produced by LS and LS2 had an
order of magnitude more edges than the estimates obtained through PLVAR and
PLVAR2. The difference between the likelihood-based methods and the PLVAR
methods is smaller but still evident. The MSEs on the test data were somewhat
equal for all the methods except for LS and LS2. The MSEs obtained through LS2
were slightly higher that the MSEs obtained through the likelihood-based and
the PLVAR methods. The MSEs produced by LS were about 100 times as high.

The computation times were minimal for LS and LS2 but this makes hardly a
difference if the goal is to learn a sparse VAR model. The computation time
required by PLVAR2 was an order of magnitude shorter than the time required by
LASSO2 and SCAD2. The PLVAR method, tasked with the additional challenge of
estimating the lag length, was slower than PLVAR2 but still faster than LASSO2
and SCAD2.

The example in Figure \ref{figure:results_matrix_visualization} highlights the
difference in sparsity of the model estimates obtained through PLVAR2 and the
likelihood-based methods. In this particular case, it is evident that PLVAR2
produced the sparsest precision matrix and lag matrices, corresponding to the
contemporaneous network and temporal network, respectively. One can find some
common structures in all the precision matrices but the structures in the
likelihood-based estimates are much denser.

%===============================================================================

\section{Discussion}
\label{section:discussion}

In this paper we have extended the idea of structure learning of Gaussian
graphical models via pseudo-likelihood and Bayes factors from the
cross-sectional domain into the time-series domain of VAR models. We have
presented a novel method, PLVAR, that is able to infer the complete VAR model
structure, including the lag length, in a single run. We have shown that the
PLVAR method produces a consistent structure estimator in the class of VAR
models. We have also presented an iterative maximum likelihood procedure for
inferring the model parameters for a given sparse VAR structure, providing a
complete VAR model learning pipeline.

The experimental results on synthetic data suggest that the PLVAR method is
better suited for recovering sparse VAR structures than the available
likelihood-based methods, a result that is in line with previous research
\cite{Leppa-aho-2017,Pensar-2017}. The results on EEG data provide further evidence on the feasibility of the
PLVAR method. Even though LASSO and SCAD reached quite similar MSEs, PLVAR was
able to do so while producing much sparser models and using significantly less
computation time. The conventional LS and LS2 methods were even faster but they
are unable to learn sparse VAR models. Also, the MSEs produced by LS and LS2
were the highest among all the methods under consideration. All in all, the results show that PLVAR is a highly competitive method
for high-dimensional VAR structure learning, from both an accuracy point of
view and a purely computational point of view.

It should be noted that the PLVAR implementation used in the experiments is
purely sequential. It would be relatively straightforward to make the search
algorithm run in parallel, thereby reducing the computation time up to
$1/n_\text{cpus}$ where $n_\text{cpus}$ is the number of CPU cores available on
the system. While it is common for modern desktop and laptop workstations to
be equipped with 4 to 8 physical cores, parallelization can be taken much
further by utilizing cloud computing services. Several cloud vendors offer
high-power virtual machines with tens of CPUs and the possibility of organizing
them into computing clusters. This makes PLVAR highly competitive against
methods that cannot be easily parallelized, especially when the number of
variables in the model is large.

In summary, PLVAR is an accurate and fast method for learning VAR models. It is
able to produce sparse estimates of the model structure and thereby allows the
model to be interpreted to gain additional knowledge in the problem domain.

In future work, we plan to apply the PLVAR method to domains that make
extensive use of VAR modeling and where improvements in estimation accuracy and
speed have the potential to enhance the end results significantly. An example
of these domains is brain research where brain connectivity is measured in
terms of Granger causality index, directed transfer function, and partial
directed coherence. Since these measures are derived from VAR models estimated
from the data, one can expect more accurate estimates to result in more
accurate measures of connectivity. Moreover, the enhanced possibilites in VAR
model fitting introduced here allow more complex models to be studied. Further
research with applications to other high-dimensional problems is also
encouraged.

%===============================================================================

\section*{Supplementary material}

The R code written for the experiments is publicly available in the repository
https://github.com/kimmo-suotsalo/plvar. The EEG data can be freely obtained
from PhysioNet \cite{Goldberger-2000} at
https://www.physionet.org/content/chbmit/1.0.0/.

%===============================================================================

\section*{Acknowledgements}
    This work was supported by RIKEN Special Postdoctoral Researcher Program
    (Yingying Xu). The authors wish to acknowledge CSC – IT Center for Science,
    Finland, for computational resources.

%===============================================================================

\section*{Appendices}

\section*{A Pseudo-code for the search algorithm}

The pseudo-code for the greedy search procedure is presented in Algorithm
\ref{algorithm:plvar}. The algorithm has been published previously in
\cite{Leppa-aho-2017} and was originally introduced in \cite{Pensar-2017},
based on \cite{Tsamardinos-2006}.

\begin{algorithm}
\caption{Greedy search for finding the temporal Markov blanket of node $i$}
\label{algorithm:plvar}
\begin{algorithmic}[1]
\Function{find\_mb}{$i, \mathbf{Z}, k$}
    \State mb($i$), $\widehat{\text{mb}}$($i$) $\leftarrow \emptyset$
    \While{$\widehat{\text{mb}}$($i$) has changed AND $|\text{mb}(i)| < N-1$ }
        \State C $\leftarrow V \setminus \text{fa}(i)$
        \State \Comment{above, $V$ contains lagged variables from $t-1$ to
        $t-k$}
        \State mb($i$) $\leftarrow \widehat{\text{mb}}(i)$
        \For{$j \in C$}
            \If{$p(\mathbf{Z}_i | \mathbf{Z}_{\text{mb}(i) \cup \{j\}},k)$ $>$
            $p(\mathbf{Z}_i |
            \mathbf{Z}_{\widehat{\text{mb}}(i)},k)$}
                \State $\widehat{\text{mb}}$($i$) $\leftarrow$ mb($i$) $\cup
                \{j\}$
            \EndIf
        \EndFor
        \While{$\widehat{\text{mb}}$($i$) has changed}
            \State mb($i$) $\leftarrow \widehat{\text{mb}}(i)$
            \For{$j \in \text{mb}(i)$}
                \If{$p(\mathbf{Z}_i | \mathbf{Z}_{\text{mb}(i) \setminus
                \{j\}},k)$ $>$ $p(\mathbf{Z}_i
                | \mathbf{Z}_{\widehat{\text{mb}}(i)},k)$}
                    \State $\widehat{\text{mb}}$($i$) $\leftarrow$ mb($i$)
                    $\setminus \{j\}$
                \EndIf
            \EndFor
        \EndWhile
    \EndWhile
    \State \Return $\widehat{\text{mb}}(i)$
\EndFunction
\end{algorithmic}
\end{algorithm}

\section*{B Consistency proof}
The below result states that the modified FMPL given in
\eqref{eqn:fmpl_temp_opt} is consistent for inferring the temporal Markov
blankets in the considered model class. The proof is adapted from
\cite{Leppa-aho-2017} where the authors use a similar reasoning for Gaussian
graphical models with cross-sectional data.
\begin{theorem} \label{lemma:consistency_of_local_pa_fmpl}
    Define $\mathbf{Z}$ and \normalfont $p(\mathbf{Z}_i | \mathbf{Z}_{\text{mb}(i)}, k)$
    \textit{as in Equation
         \eqref{eq:Zmatrix} and (\ref{eqn:local_fmpl_temp}), respectively. Let $S = \{ d+1,\ldots,(k+1)d \}$ contain the true temporal Markov blanket $\text{mb}(i)^*$ or, equivalently, let $k \geq k^*$ where $k^*$ is the true lag length. Then,} \normalfont
\begin{equation}\label{eq:tmb_res}
\plim_{N \rightarrow \infty} \argmax_{\text{mb}(i) \subseteq S} p(\mathbf{Z}_i | \mathbf{Z}_{\text{mb}(i)}, k) = \text{mb}(i)^*.
 \end{equation}
\end{theorem}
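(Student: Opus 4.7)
The plan is to work with the logarithm of the local FMPL and to show that, for every candidate $\text{mb}\subseteq S$ with $\text{mb}\neq \text{mb}(i)^*$, the score gap $\log p(\mathbf{Z}_i\mid \mathbf{Z}_{\text{mb}(i)^*},k) - \log p(\mathbf{Z}_i\mid \mathbf{Z}_\text{mb},k)$ tends to $+\infty$ in probability as $N\to\infty$. Since $S$ is finite, a union bound over the at most $2^{|S|}$ candidate subsets then upgrades this pointwise statement to the uniform conclusion \eqref{eq:tmb_res}.

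The first step is to rewrite the determinant ratio via a Schur complement: $|\mathbf{S}_{\text{fa}(i)}|/|\mathbf{S}_{\text{mb}(i)}|$ equals the residual sum of squares from regressing $\mathbf{Z}_i$ on $\mathbf{Z}_{\text{mb}(i)}$. Because the true VAR process is assumed stable, $\mathbf{Z}$ is stationary and geometrically ergodic with finite second moments, and the classical consistency of sample autocovariances yields
\[
\hat\sigma^2(\text{mb}) := \frac{|\mathbf{S}_{\text{fa}(i)}|}{(N-k)\,|\mathbf{S}_{\text{mb}(i)}|} \;\longrightarrow\; \sigma^2(\text{mb}) := \mathrm{Var}(Y_t^i \mid Y_\text{mb})
\]
in probability as $N\to\infty$. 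Applying Stirling's formula $\log\Gamma(x)=(x-\tfrac{1}{2})\log x - x + O(1)$ to both gamma factors in \eqref{eqn:local_fmpl_temp} and collecting coefficients of $\log(N-k)$ then produces the decomposition
\[
\log p(\mathbf{Z}_i\mid \mathbf{Z}_\text{mb},k) = -\tfrac{N-k-1}{2}\log\hat\sigma^2(\text{mb}) - \tfrac{p_i+1}{2}\log(N-k) + C_N + O_p(1),
\]
where $C_N$ depends on $N$ and $k$ but not on the choice of $\text{mb}$.

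The case analysis is then standard. If $\text{mb}\nsupseteq \text{mb}(i)^*$, at least one relevant lagged predictor is missing, so by Gaussianity $\sigma^2(\text{mb})>\sigma^2(\text{mb}(i)^*)$ strictly; the first term contributes a positive gap of order $\tfrac{N}{2}\log\bigl[\sigma^2(\text{mb})/\sigma^2(\text{mb}(i)^*)\bigr]$ that dominates the $O(\log N)$ penalty and the $O_p(1)$ remainder, so $\text{mb}(i)^*$ wins with probability tending to one. If $\text{mb}\supsetneq \text{mb}(i)^*$, the VAR equation \eqref{equation_VAR(k)_model} together with the definition in \eqref{eqn:temporal_markov_blanket} gives $\sigma^2(\text{mb})=\sigma^2(\text{mb}(i)^*)$, and the sample analogues differ only by $O_p(1/N)$ (the familiar $\chi^2$-sized drop in residual sum of squares per spurious covariate), so they contribute only $O_p(1)$ to the log-FMPL. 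The leading $O(N)$ terms therefore cancel and the decision is made by the dimension penalty $\tfrac{p_\text{mb}-p_*}{2}\log(N-k)\to +\infty$, once again in favor of $\text{mb}(i)^*$.

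The main technical hurdle I anticipate is justifying the convergence $\hat\sigma^2(\text{mb})\to\sigma^2(\text{mb})$ in the non-i.i.d.\ VAR setting; this is precisely where the stability assumption bites, providing the stationarity and geometric mixing needed to apply a weak law of large numbers to the cross-products in $\mathbf{Z}^\top\mathbf{Z}$. Once that ergodic LLN is in hand, the Stirling expansion and the two-case comparison are essentially mechanical, and the argument closely parallels the cross-sectional proof in \cite{Leppa-aho-2017}.
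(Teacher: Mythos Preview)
Your proposal is correct and follows essentially the same route as the paper's own proof: a Stirling expansion of the gamma factors, the identification of $|\mathbf{S}_{\text{fa}(i)}|/|\mathbf{S}_{\text{mb}(i)}|$ with a residual (conditional) variance, and a two-case comparison (missing predictors give a strict variance gap of order $N$; superfluous predictors give only an $O_p(1)$ likelihood-ratio contribution so the $\tfrac{p_i+1}{2}\log(N-k)$ penalty decides). The paper packages the two cases as Lemmas~\ref{lemma:missing_nodes} and~\ref{lemma:extra_nodes}; your explicit appeal to stationarity and an ergodic LLN for $\mathbf{Z}^\top\mathbf{Z}$ is, if anything, a little more careful than the paper's shortcut of treating the rows of $\mathbf{Z}$ as a Gaussian sample.
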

\begin{proof}
Under the current assumption of i.i.d. $N(\zerovec,\Sigma)$ error terms, it can
be shown that $\yvec_t$ is a Gaussian process, meaning that the subcollections
$\yvec_t, \yvec_{t-1},\ldots, \yvec_{t-k}$ have a multivariate normal
distribution \cite{Lutkepohl-2005}. Thus, $\mathbf{Z}$ can be considered a data
matrix generated from a multivariate normal distribution over the variables
$Z_1,\ldots , Z_{(k+1)d}$ where the dependence structure of the model is
determined by the GVAR structure. In particular, following from the directed
Markov property in (\ref{eqn:temporal_markov_blanket}), the dependence
structure will be such that $$Z_i
\perp Z_{\text{mb}(i)^*} \mid Z_{S \setminus \text{mb}(i)^*},$$
where $i \in \{ 1,\ldots , d\}$.
Following a similar line of reasoning as in \cite{Leppa-aho-2017}, we can now show through Lemma \ref{lemma:missing_nodes} and Lemma \ref{lemma:extra_nodes} that \eqref{eq:tmb_res} holds. More specifically, Lemma \ref{lemma:missing_nodes} guarantees that
    as $N \rightarrow \infty$, $$p(\mathbf{Z}_i | \mathbf{Z}_{\text{mb}(i)}, k)
    < p(\mathbf{Z}_i |
    \mathbf{Z}_{\text{mb}(i)^*}, k)$$ for any $\text{mb}(i)$ that does not
    contain all
    the nodes in $\text{mb}(i)^*$. Similarly, Lemma \ref{lemma:extra_nodes}
    guarantees that  as $N \rightarrow \infty$, $$p(\mathbf{Z}_i
    |\mathbf{Z}_{\text{mb}(i)}, k)
    < p(\mathbf{Z}_i | \mathbf{Z}_{\text{mb}(i)^*}, k)$$ for any $\text{mb}(i)$
    that contains
    one or more nodes not in $\text{mb}(i)^*$.
\end{proof}
The proofs of Lemma \ref{lemma:missing_nodes} and Lemma \ref{lemma:extra_nodes}
follow closely those of \cite{Leppa-aho-2017}. However, for completeness, we
give here the proofs in detail.

\begin{lemma} \label{lemma:missing_nodes}
    Let \normalfont $\text{mb}(i) \subset \text{mb}(i)^*$ \textit{and}
    $A \subset V \setminus \text{fa}(i)^*$. \textit{Then}
    \begin{equation}
    \plim_{N \rightarrow \infty} \log \frac{p(\mathbf{Z}_i |
    \mathbf{Z}_{\text{mb}(i)^* \cup A},
        k)}{p(\mathbf{Z}_i | \mathbf{Z}_{\text{mb}(i) \cup A}, k)} = \infty.
    \end{equation}
\end{lemma}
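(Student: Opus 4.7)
The plan is to take the log of the ratio of the two local FMPLs from \eqref{eqn:local_fmpl_temp}, isolate the single term that scales linearly in $N$, and show that its coefficient converges in probability to a strictly positive constant. Writing $B = \text{mb}(i)^* \cup A$ and $C = \text{mb}(i) \cup A$, all factors in \eqref{eqn:local_fmpl_temp} other than the determinant ratio---the powers of $\pi$, the factor $(N-k)^{-(2p_i+1)/2}$, and the $\Gamma$-ratios after Stirling---contribute only an additive $O(\log N)$ to the log, so
\begin{equation*}
\log\frac{p(\mathbf{Z}_i\mid \mathbf{Z}_B,k)}{p(\mathbf{Z}_i\mid \mathbf{Z}_C,k)}
= \frac{N-k-1}{2}\log\frac{|S_{\{i\}\cup C}|/|S_C|}{|S_{\{i\}\cup B}|/|S_B|}+O(\log N).
\end{equation*}

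Next I would invoke the fact that $\mathbf{Z}$ can be viewed as a sample from a $(k+1)d$-dimensional Gaussian whose covariance $\Sigma_Z$ is determined by the GVAR parameters (as in the proof of Theorem~\ref{lemma:consistency_of_local_pa_fmpl}). By the law of large numbers applied to the rows of $\mathbf{Z}$---which are identically distributed and ergodic thanks to stationarity of the stable VAR process---we have $\tfrac{1}{N-k}S_{A'}\overset{P}{\to}\Sigma_{Z,A'}$ for every index set $A'$. Combining the continuous mapping theorem with the standard Gaussian identity $|\Sigma_{Z,\{i\}\cup T}|/|\Sigma_{Z,T}|=\mathrm{Var}(Z_i\mid Z_T)$ shows that the ratio inside the log converges in probability to $\mathrm{Var}(Z_i\mid Z_C)/\mathrm{Var}(Z_i\mid Z_B)$.

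The crucial step is showing this limit is strictly greater than $1$. Because $B\supseteq \text{mb}(i)^*$ and $Z_i\perp Z_{S\setminus\text{mb}(i)^*}\mid Z_{\text{mb}(i)^*}$ (the directed Markov property used in the proof of Theorem~\ref{lemma:consistency_of_local_pa_fmpl}), we have $\mathrm{Var}(Z_i\mid Z_B)=\mathrm{Var}(Z_i\mid Z_{\text{mb}(i)^*})$. On the other hand, since $\text{mb}(i)\subsetneq\text{mb}(i)^*$ and $A\cap\text{mb}(i)^*=\emptyset$, the set $C$ misses at least one node $j^*\in\text{mb}(i)^*\setminus\text{mb}(i)$. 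By the assumption that the corresponding lag-matrix entry is genuinely non-zero (faithfulness of the Gaussian GVAR model to its graph), $Z_{j^*}$ is not conditionally independent of $Z_i$ given $Z_C$, so adjoining $j^*$ (and iteratively the remaining missing nodes) strictly decreases the Gaussian conditional variance, yielding $\mathrm{Var}(Z_i\mid Z_C)>\mathrm{Var}(Z_i\mid Z_B)$.

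Multiplying the resulting strictly positive constant $\log[\mathrm{Var}(Z_i\mid Z_C)/\mathrm{Var}(Z_i\mid Z_B)]$ by the diverging factor $(N-k-1)/2$ produces a term tending to $+\infty$ in probability, which dominates the $O(\log N)$ remainder and gives the claim. The main obstacle is the strict-inequality step: translating the minimality of the true Markov blanket into a quantitative strict gap between the two conditional variances. That step relies on faithfulness of the Gaussian distribution to the GVAR graph, which is implicit in the paper's convention that the listed non-zero lag and precision entries correspond precisely to the edges of $G_{TS}$.
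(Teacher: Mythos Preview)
Your proposal is correct and follows essentially the same approach as the paper: both isolate the determinant-ratio term as the only contribution scaling linearly in $N$, convert it via consistency of the sample covariance into a ratio of partial (conditional) variances, and argue that this ratio strictly exceeds one because $C$ omits genuine Markov-blanket nodes. The paper carries out the Stirling computation and the predictor-variance decomposition more explicitly, whereas you bundle the lower-order terms as $O(\log N)$ and name the strict-inequality step as a faithfulness assumption; the paper makes the same assumption implicitly when it asserts that $\sigma(\hat z_i[\zvec_B - \hat{\zvec}_B[\zvec_{\text{mb}(i)\cup A}]])>0$ ``since $B\subset\text{mb}(i)^*$''.
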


\begin{proof}
    By defining $p_i \triangleq |\text{mb}(i)^* \cup A|$ and $a \triangleq
    |\text{mb}(i) \cup A| - p_i$, we have that
    \begin{align} \label{eqn:missing_nodes}
        &\log \frac{p(\mathbf{Z}_i | \mathbf{Z}_{\text{mb}(i)^* \cup A},
            k)}{p(\mathbf{Z}_i | \mathbf{Z}_{\text{mb}(i) \cup A}, k)}
        =
        \log \frac{ \Gamma\left(\frac{N-k+p_i}{2}\right) }
        { \Gamma\left(\frac{N-k+p_i+a}{2}\right) }
        + \nonumber \\
        &\log \frac{ \Gamma\left(\frac{p_i+a+1}{2}\right) }
        { \Gamma\left(\frac{p_i+1}{2}\right) }
        +
        a \log(N-k)
        - \nonumber \\
        &\frac{N-k-1}{2} \log
        \frac{ \left|\mathbf{S_\text{fa($i$)$^* \cup A$}}\right|
            \left|\mathbf{S_\text{mb($i$)$ \cup A$}}\right|}
        { \left|\mathbf{S_\text{mb($i$)$^* \cup A$}}\right|
            \left|\mathbf{S_\text{fa($i$)$ \cup A$}}\right|}.
    \end{align}
    The first term on the right-hand side equals
    \begin{equation}
        \log \Gamma (M) - \log \Gamma \left(M + \frac{a}{2}\right)
    \end{equation}
    where $M = (N - k + p_i) / 2$. As $N \rightarrow \infty$, $M \rightarrow
    \infty$ and by Stirlings's formula
    \begin{align}
        &\log \Gamma (M) - \log \Gamma \left(M + \frac{a}{2}\right)
        =
        \left(M - \frac{1}{2}\right) \log M - M \nonumber \\
        &-\left[ \left(M + \frac{a}{2} - \frac{1}{2} \right) \log \left(M +
        \frac{a}{2}\right) - \left(M + \frac{a}{2}\right) \right] +
        O(1).
    \end{align}
    Here, the right-hand side can be written as
    \begin{align}
    M \log \frac{M}{M + \frac{a}{2}}
    +
    \frac{1}{2} \log \frac{M + \frac{a}{2}}{M}
    -
    \frac{a}{2} \log \left(M + \frac{a}{2}\right)
    +
    O(1)
    \end{align}
    where we have included the term $a / 2$ in $O(1)$. Since
    \begin{align}
    &M \log \frac{M}{M + \frac{a}{2}}
    =
    \frac{1}{2} \log \left(\frac{1}{1 + \frac{a}{2M}}\right)^{2M}
    \xrightarrow[M \rightarrow \infty] \nonumber \\
    &\frac{1}{2} \log e^{-a}
    =
    - \frac{a}{2} = O(1)
    \end{align}
    and
    \begin{align}
    \frac{1}{2} \log \frac{M + \frac{a}{2}}{M}
    =
    \frac{1}{2} \log \left(1 + \frac{a}{2M}\right) \xrightarrow[M \rightarrow
    \infty]{} 0,
    \end{align}
    we have that
    \begin{align}
    \log \frac{ \Gamma\left(\frac{N-k+p_i}{2}\right) }
    { \Gamma\left(\frac{N-k+p_i+a}{2}\right) }
    \xrightarrow[N \rightarrow \infty]{}
    -\frac{a}{2} \log \left(\frac{N-k+p_i+a}{2} \right) + O(1).
    \end{align}

    The second term on the right-hand side of Equation
    (\ref{eqn:missing_nodes}) can be included in $O(1)$ since it is constant
    with respect to $N$. The fourth term
    \begin{align}
    - \frac{N-k-1}{2} \log
    \frac{ \left|\mathbf{S_\text{fa($i$)$^* \cup A$}}\right|
        \left|\mathbf{S_\text{mb($i$)$ \cup A$}}\right|}
    { \left|\mathbf{S_\text{mb($i$)$^* \cup A$}}\right|
        \left|\mathbf{S_\text{fa($i$)$ \cup A$}}\right|}
    \end{align}
    can be expressed using the covariance $\boldsymbol{\Sigma}$: since
    $\mathbf{S} = \mathbf{Z}^\top \mathbf{Z} = N
    \widehat{\boldsymbol{\Sigma}}_\text{ML}$, it
    holds that
    \begin{align}
    \frac{ \left|\mathbf{S_\text{fa($i$)$^* \cup A$}}\right|
        \left|\mathbf{S_\text{mb($i$)$ \cup A$}}\right|}
    { \left|\mathbf{S_\text{mb($i$)$^* \cup A$}}\right|
        \left|\mathbf{S_\text{fa($i$)$ \cup A$}}\right|}
    \xrightarrow[N \rightarrow \infty]{}
    \frac{ \left|\boldsymbol{\Sigma_\text{fa($i$)$^* \cup A$}}\right|
        \left|\boldsymbol{\Sigma_\text{mb($i$)$ \cup A$}}\right|}
    { \left|\boldsymbol{\Sigma_\text{mb($i$)$^* \cup A$}}\right|
    \left|\boldsymbol{\Sigma_\text{fa($i$)$ \cup A$}}\right|}
    \end{align}
    in probability. Let us now partition $\boldsymbol{\Sigma_\text{fa($i$)$^*
    \cup A$}}$ as
    \begin{align}
    \boldsymbol{\Sigma_\text{fa($i$)$^* \cup A$}}
     =
     \begin{bmatrix}
        \sigma^2(z_i) & \sigma(z_i, \zvec_\text{mb($i$)$^* \cup A$}) \\
        \sigma(z_i, \zvec_\text{mb($i$)$^* \cup A$})^\top &
        \boldsymbol{\Sigma}_\text{mb($i$)$^* \cup A$}
     \end{bmatrix}
    \end{align}
    where $\sigma(z_i, \zvec_\text{mb($i$)$^* \cup A$})$ denotes a row vector
    whose elements are the covariances between the variable $z_i$ and each of
    the variables in $\text{mb($i$)$^* \cup A$}$. The determinant of this
    partitioned matrix is
    \begin{align}
    &\left|\boldsymbol{\Sigma}_\text{mb($i$)$^* \cup A$}\right| \cdot
    \nonumber
    \\
    &\left[ \sigma^2(z_i) - \sigma(z_i, \zvec_\text{mb($i$)$^* \cup A$}) \left(
    \boldsymbol{\Sigma}_\text{mb($i$)$^* \cup A$}
    \right)^{-1} \sigma(z_i, \zvec_\text{mb($i$)$^* \cup A$})^\top \right]
    \end{align}
    where the term
    \begin{align}
    &\sigma(z_i, \zvec_\text{mb($i$)$^* \cup A$}) \left(
    \boldsymbol{\Sigma}_\text{mb($i$)$^* \cup A$}
    \right)^{-1} \sigma(z_i, \zvec_\text{mb($i$)$^* \cup A$})^\top \triangleq
    \nonumber \\
    &\sigma(\hat{z}_i [\zvec_\text{mb($i$)$^* \cup A$}])
    \end{align}
    is the variance of the linear LS predictor of $z_i$ from
    $\zvec_\text{mb($i$)$^* \cup A$}$.
    By definition, the residual variance of $z_i$ after subtracting the
    variance of the linear LS predictor is the partial variance $\sigma(z_i |
    \zvec_\text{mb($i$)$^* \cup A$})$; hence,
    \begin{align}
    \left|\boldsymbol{\Sigma}_\text{fa($i$)$^* \cup A$}\right|
    =
    \left|\boldsymbol{\Sigma}_\text{mb($i$)$^* \cup A$}\right|
    \sigma(z_i | \zvec_\text{mb($i$)$^* \cup A$})
    \end{align}
    which implies that
    \begin{align}
    \frac{ \left|\boldsymbol{\Sigma_\text{fa($i$)$^* \cup A$}}\right|
        \left|\boldsymbol{\Sigma_\text{mb($i$)$ \cup A$}}\right|}
    { \left|\boldsymbol{\Sigma_\text{mb($i$)$^* \cup A$}}\right|
        \left|\boldsymbol{\Sigma_\text{fa($i$)$ \cup A$}}\right|}
    =
    \frac{\sigma(z_i | \zvec_\text{mb($i$)$^* \cup A$})}{\sigma(z_i |
    \zvec_\text{mb($i$)$ \cup A$})}.
    \end{align}

    Combining the results so far, Equation (\ref{eqn:missing_nodes}) can
    be rewritten as
    \begin{align}
    &\log \frac{p(\mathbf{Z}_i | \mathbf{Z}_{\text{mb}(i)^* \cup A},
        k)}{p(\mathbf{Z}_i | \mathbf{Z}_{\text{mb}(i) \cup A}, k)}
    =
    -\frac{a}{2} \log \left(\frac{N-k+p_i+a}{2} \right) + \nonumber \\
    &a \log(N-k) - \frac{N-k-1}{2} \log \frac{\sigma(z_i |
    \zvec_\text{mb($i$)$^*
    \cup A$})}{\sigma(z_i | \zvec_\text{mb($i$)$ \cup A$})} + O(1).
    \end{align}
    The second term on the right hand side equals
    \begin{align}
    & \frac{a}{2} \log (N-k)^2 = \frac{a}{2} \log (N-k) + \frac{a}{2} \log (N-k)
    \end{align}
    and therefore
    \begin{align}
    &\log \frac{p(\mathbf{Z}_i | \mathbf{Z}_{\text{mb}(i)^* \cup A},
        k)}{p(\mathbf{Z}_i | \mathbf{Z}_{\text{mb}(i) \cup A}, k)}
    = \nonumber \\
    &\frac{a}{2} \log(N-k) -
     \frac{N}{2} \log \frac{\sigma(z_i |
    \zvec_\text{mb($i$)$^* \cup A$})}{\sigma(z_i | \zvec_\text{mb($i$)$ \cup
    A$})} + O(1).
    \end{align}
    Now, consider the argument of the last logarithm term. Assume first that
    $\text{mb($i$)} \neq \emptyset$ and let $B \triangleq \text{mb($i$)}^*
    \setminus \text{mb($i$)}$. Then
    \begin{align}
    &\sigma(\hat{z}_i [\zvec_\text{mb($i$)$^* \cup A$}])
    =
    \sigma(\hat{z}_i [\zvec_\text{mb($i$)$ \cup A \cup B$}])
    = \nonumber \\
    &\sigma(\hat{z}_i [\zvec_\text{mb($i$)$ \cup A$}])
    +
    \sigma(\hat{z}_i [\zvec_B - \hat{\zvec}_B[\zvec_\text{mb($i$)$ \cup A$}]])
    \end{align}
    where the last term $\sigma(\hat{z}_i [\zvec_B -
    \hat{\zvec}_B[\zvec_\text{mb($i$)$ \cup A$}]])
    > 0$
    since $B \subset\text{mb($i$)}^*$. Hence,
    \begin{align} \label{ineqn:variances}
    &\sigma(\hat{z}_i [\zvec_\text{mb($i$)$^* \cup A$}])
    >
    \sigma(\hat{z}_i [\zvec_\text{mb($i$)$ \cup A$}])
    \end{align}
    which is equivalent to
    \begin{align}
    &\sigma^2(z_i) - \sigma(\hat{z}_i [\zvec_\text{mb($i$)$^* \cup A$}])
    <
    \sigma^2(z_i) -\sigma(\hat{z}_i [\zvec_\text{mb($i$)$ \cup A$}]).
    \end{align}
    This implies that
    \begin{align}
    \sigma(z_i | \zvec_\text{mb($i$)$^* \cup A$})
    <
    \sigma(z_i | \zvec_\text{mb($i$)$ \cup A$})
    \end{align}
    and therefore the last logarithm term
    \begin{align}
    &- \frac{N}{2} \log \frac{\sigma(z_i |
    \zvec_\text{mb($i$)$^* \cup A$})}{\sigma(z_i | \zvec_\text{mb($i$)$
    \cup A$})}
    \xrightarrow[N \rightarrow \infty]{} \infty.
    \end{align}
    Assume then that $\text{mb($i$)} = \emptyset$. This implies that
    \begin{align}
    \sigma(\hat{z}_i [\zvec_A])
    +
    \sigma(\hat{z}_i [\zvec_\text{mb($i$)$^*$} -
    \hat{z}_\text{mb($i$)$^*$}[\zvec_A]])
    >
    \sigma(\hat{z}_i [\zvec_A])
    \end{align}
    which is equivalent to
    \begin{align}
    &\sigma(\hat{z}_i [\zvec_\text{mb($i$)$^* \cup A$}])
    >
    \sigma(\hat{z}_i [\zvec_A])
    \end{align}
    and further to Inequation (\ref{ineqn:variances}). Hence, the last
    logarithm term behaves exactly as for $\text{mb($i$)} \neq \emptyset$.

    Since $\text{mb}(i) \subset \text{mb}(i)^*$, $a < 0$, and the second-last
    logarithm term
    \begin{align}
    \frac{a}{2} \log(N-k) \xrightarrow[N \rightarrow \infty]{} -\infty.
    \end{align}
    Combining all the results, we have that
    \begin{align}
    &\log \frac{p(\mathbf{Z}_i | \mathbf{Z}_{\text{mb}(i)^* \cup A},
        k)}{p(\mathbf{Z}_i | \mathbf{Z}_{\text{mb}(i) \cup A}, k)}
    = \nonumber \\
    &\frac{a}{2} \log(N-k) -
    \frac{N}{2} \log \frac{\sigma(z_i |
    \zvec_\text{mb($i$)$^* \cup A$})}{\sigma(z_i | \zvec_\text{mb($i$)$ \cup
        A$})} + O(1)
    \end{align}
    where the first and the second term on the right-hand side tend to
    $-\infty$ and $\infty$, respectively. However, since $-\log N$ decreases
    slower than $N$ increases, the right-hand side of the equation tends to
    $\infty$. This proves the statement of Lemma \ref{lemma:missing_nodes}.
    \end{proof}

\begin{lemma} \label{lemma:extra_nodes}
    Let \normalfont $\text{mb}(i)^* \subset \text{mb}(i)$. \textit{Then}
    \begin{equation}
    \plim_{N \rightarrow \infty} \log \frac{p(\mathbf{Z}_i |
    \mathbf{Z}_{\text{mb}(i)^*},
        k)}{p(\mathbf{Z}_i | \mathbf{Z}_{\text{mb}(i)}, k)} = \infty.
    \end{equation}
\end{lemma}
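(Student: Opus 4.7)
My plan is to mirror the derivation used in Lemma \ref{lemma:missing_nodes}, because the local FMPL formula \eqref{eqn:local_fmpl_temp} is the same; only the sign of the ``excess dimension'' parameter and the limiting behavior of the partial variances differ. Concretely, I would set $p_i \triangleq |\text{mb}(i)^*|$ and $a \triangleq |\text{mb}(i)| - p_i$; since $\text{mb}(i)^* \subset \text{mb}(i)$, we now have $a > 0$, in contrast to the $a<0$ case handled in Lemma \ref{lemma:missing_nodes}. I would then expand $\log p(\mathbf{Z}_i | \mathbf{Z}_{\text{mb}(i)^*}, k) - \log p(\mathbf{Z}_i | \mathbf{Z}_{\text{mb}(i)}, k)$ term by term using \eqref{eqn:local_fmpl_temp}, isolating a Gamma-ratio contribution, a $(N-k)$-power contribution, and a log-determinant contribution.

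Next I would reuse the Stirling expansion carried out in the proof of Lemma \ref{lemma:missing_nodes} verbatim to show that the Gamma-ratio term equals $-\tfrac{a}{2}\log(N-k) + O(1)$, while the $(N-k)^{-(2p_i+1)/2}$ factors contribute $+a\log(N-k)$, netting a deterministic $+\tfrac{a}{2}\log(N-k)$ that will be the driving term.

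The crucial structural difference lies in the log-determinant ratio. Using the same partitioned-determinant / partial-variance identity as in Lemma \ref{lemma:missing_nodes}, I would rewrite
\begin{equation*}
\frac{|\mathbf{S}_{\text{fa}(i)^*}||\mathbf{S}_{\text{mb}(i)}|}{|\mathbf{S}_{\text{mb}(i)^*}||\mathbf{S}_{\text{fa}(i)}|} \xrightarrow[N\to\infty]{} \frac{\sigma(z_i | \zvec_{\text{mb}(i)^*})}{\sigma(z_i | \zvec_{\text{mb}(i)})}
\end{equation*}
in probability. Since $\text{mb}(i)^*$ is a valid temporal Markov blanket, the conditional independence $Z_i \perp Z_{\text{mb}(i)\setminus \text{mb}(i)^*}\mid Z_{\text{mb}(i)^*}$ forces the two partial variances on the right to be equal, so the population limit of the ratio is exactly $1$ and its logarithm vanishes. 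This is the qualitative point where the two lemmas diverge: in Lemma \ref{lemma:missing_nodes} this log-ratio converged to a strictly positive constant and dominated everything, whereas here it collapses.

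The main obstacle, therefore, is to argue that even after multiplying by the prefactor $\tfrac{N-k-1}{2}$, the log-determinant term remains $O_p(1)$ and cannot wipe out the $+\tfrac{a}{2}\log(N-k)$ gain. I would justify this by a standard Wilks-type expansion: since both $\hat\sigma(z_i | \zvec_{\text{mb}(i)^*})$ and $\hat\sigma(z_i | \zvec_{\text{mb}(i)})$ are $\sqrt{N}$-consistent for the same population quantity, their ratio equals $1 + O_p(N^{-1/2})$, so a Taylor expansion of $\log$ combined with the exact likelihood-ratio chi-square limit for nested Gaussian linear regressions gives $(N-k-1)\log(\text{ratio}) = O_p(1)$. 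Putting the three contributions together, the log-ratio equals $\tfrac{a}{2}\log(N-k) + O_p(1)$, which tends to $+\infty$ in probability, proving the lemma.
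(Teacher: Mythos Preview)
Your proposal is correct and follows essentially the same route as the paper: the paper likewise reduces the log-ratio to $\tfrac{a}{2}\log(N-k)+O_p(1)$ by reusing the Stirling computation from Lemma~\ref{lemma:missing_nodes} and controls the determinant term by recognizing it as a deviance statistic with an asymptotic $\chi^2$ distribution, hence $O_p(1)$---which is exactly your Wilks/likelihood-ratio argument. One minor caveat: the intermediate rate $1+O_p(N^{-1/2})$ you state for the partial-variance ratio is too coarse on its own (a Taylor step would only yield $O_p(N^{1/2})$ after multiplying by $N$), so the conclusion indeed rests entirely on the $\chi^2$ limit, just as in the paper.
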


\begin{proof}

    By defining $p_i \triangleq |\text{mb}(i)^*|$ and $a \triangleq
    |\text{mb}(i)| - p_i$, we have that
    \begin{align} \label{eqn:extra_nodes}
    &\log \frac{p(\mathbf{Z}_i | \mathbf{Z}_{\text{mb}(i)^*},
        k)}{p(\mathbf{Z}_i | \mathbf{Z}_{\text{mb}(i)}, k)}
    =
    \log \frac{ \Gamma\left(\frac{N-k+p_i}{2}\right) }
    { \Gamma\left(\frac{N-k+p_i+a}{2}\right) }
    + \nonumber \\
    &\log \frac{ \Gamma\left(\frac{p_i+a+1}{2}\right) }
    { \Gamma\left(\frac{p_i+1}{2}\right) }
    +
    a \log(N-k)
    - \nonumber \\
    &\frac{N-k-1}{2} \log
    \frac{ \left|\mathbf{S_\text{fa($i$)$^*$}}\right|
        \left|\mathbf{S_\text{mb($i$)$$}}\right|}
    { \left|\mathbf{S_\text{mb($i$)$^*$}}\right|
        \left|\mathbf{S_\text{fa($i$)$$}}\right|}.
    \end{align}
    This equation is identical to (\ref{eqn:missing_nodes}) except for the
    last term on the right-hand side. As $N \rightarrow \infty$, this term
    tends to
    \begin{align}
    &-N \log
    \frac{ \left|\mathbf{S_\text{fa($i$)$^*$}}\right|
        \left|\mathbf{S_\text{mb($i$)$$}}\right|}
    { \left|\mathbf{S_\text{mb($i$)$^*$}}\right|
        \left|\mathbf{S_\text{fa($i$)$$}}\right|}
    \cdot \frac{1}{2} + O(1).
    \end{align}
    Now select $B \subset \text{mb($i$)}$ so that $\text{mb($i$)} =
    \text{mb($i$)}^* \cup B$. Then
    \begin{align}
    &-N \log
    \frac{ \left|\mathbf{S_\text{fa($i$)$^*$}}\right|
        \left|\mathbf{S_\text{mb($i$)$$}}\right|}
    { \left|\mathbf{S_\text{mb($i$)$^*$}}\right|
        \left|\mathbf{S_\text{fa($i$)$$}}\right|}
    =
    \text{dev}(z_i \perp \!\!\!\!\!\! \perp \zvec_B \: | \:
    \zvec_{\text{mb($i$)}^*} )
    \end{align}
    where, asymptotically, the deviance $\text{dev}(z_i \perp \!\!\!\!\!\!
    \perp \zvec_B \: |
    \:
    \zvec_{\text{mb($i$)}^*} )$ follows \cite{Whittaker-1990} a
    $\chi^2$ distribution. This implies that the deviance is
    bounded in probability. Consequently,
    \begin{align}
    - \frac{N-k-1}{2} \log
    \frac{ \left|\mathbf{S_\text{fa($i$)$^*$}}\right|
        \left|\mathbf{S_\text{mb($i$)$$}}\right|}
    { \left|\mathbf{S_\text{mb($i$)$^*$}}\right|
        \left|\mathbf{S_\text{fa($i$)$$}}\right|}
    = O(1).
    \end{align}

    Now, applying the same reasoning as for Lemma \ref{lemma:missing_nodes}, we
    arrive at
    \begin{align}
    &\log \frac{p(\mathbf{Z}_i | \mathbf{Z}_{\text{mb}(i)^*},
        k)}{p(\mathbf{Z}_i | \mathbf{Z}_{\text{mb}(i)}, k)}
    =
    \frac{a}{2} \log(N-k) + O(1).
    \end{align}
    Since $\text{mb}(i)^* \subset \text{mb}(i)$, $a > 0$, and the right-hand
    side of the equation tends to $\infty$. This proves the statement of Lemma
    \ref{lemma:extra_nodes}.

\end{proof}

\bibliographystyle{ieeetr}
\bibliography{refs}

\end{document}